\documentclass[sigconf,screen]{acmart}

\usepackage{rotating}
\usepackage{amsmath,amsfonts}
\usepackage{array}
\usepackage{diagbox}
\usepackage{booktabs} 
\usepackage{colortbl}   
\usepackage{multirow}
\usepackage{makecell}
\usepackage{tablefootnote}
\usepackage{arydshln}
\usepackage{tikz}
\usepackage{longtable}
\usepackage{makecell}
\usepackage{pifont}
\usepackage{algorithm}
\usepackage{algpseudocode}

\AtBeginDocument{%
  }

\setcopyright{rightsretained}
\copyrightyear{2026}
\acmYear{2026}
\setcopyright{cc}
\setcctype{by}
\acmConference[Conference acronym 'XX]{Make sure to enter the correct
  conference title from your rights confirmation email}{June 03--05,
  2018}{Woodstock, NY}
\acmBooktitle{}

\begin{document}

\title{Relevance-Guided Residual Calibration for Spectral Compression of Dense Retrieval Embeddings}

\author{Yongkang Li}
\email{y.li7@uva.nl}
\orcid{0000-0001-6837-6184}
\affiliation{%
  \institution{University of Amsterdam}
  \city{Amsterdam}
  \country{The Netherlands}
}

\author{Panagiotis Eustratiadis}
\orcid{0000-0002-9407-1293}
\email{p.efstratiadis@uva.nl}
\affiliation{%
  \institution{University of Amsterdam}
  \city{Amsterdam}
  \country{The Netherlands}
}

\author{Evangelos Kanoulas}
\orcid{0000-0002-8312-0694}
\email{e.kanoulas@uva.nl}
\affiliation{%
  \institution{University of Amsterdam}
  \city{Amsterdam}
  \country{The Netherlands}
}



\begin{abstract}
Spectral compression reduces the storage and scoring costs of dense retrieval by projecting embeddings onto a smaller principal subspace. However, its corpus-derived weighting rules tie all retained directions to a shared spectral trend, without directly accounting for their individual contributions to relevance.
To address this, we propose \emph{Relevance-Guided Residual Calibration} (RRC), a closed-form method for refining spectral weights using relevance judgments.
RRC estimates margin-based weights from the means and variances of positive--negative score contributions along each retained direction. 
It retains directions with positive estimated weights, decomposes their log weights into a fitted power-law trend and a residual, and adds the residual to the compressor's log weights on this support.
This allows relevance judgments to adjust individual directions while preserving the compressor's overall spectral trend. RRC applies to PCA, Whitening, and SpecTemp, and incorporates the corrections into the existing transformation shared by documents and queries, without iterative model training or an additional inference stage.
Experiments with three encoders on nine datasets covering conventional and reasoning-intensive retrieval show that 
SpecTemp\,+\,RRC achieves the highest mean effectiveness among tested compressors on the BEIR group and BRIGHT at every target dimension.
On the five evaluated BRIGHT datasets, it improves over SpecTemp by an average of 2.34 nDCG@10 points at 768 dimensions. The gains also persist under scalar and product quantization, improving the trade-off between per-document storage and retrieval effectiveness. Our code is available at: \url{https://anonymous.4open.science/r/rrc-spectral-compression-B5CE/}
\end{abstract}

\begin{CCSXML}
<ccs2012>
   <concept>
       <concept_id>10002951.10003317.10003338</concept_id>
       <concept_desc>Information systems~Retrieval models and ranking</concept_desc>
       <concept_significance>500</concept_significance>
       </concept>
   <concept>
       <concept_id>10010147.10010178.10010179</concept_id>
       <concept_desc>Computing methodologies~Natural language processing</concept_desc>
       <concept_significance>500</concept_significance>
       </concept>
 </ccs2012>
\end{CCSXML}

\ccsdesc[500]{Information systems~Retrieval models and ranking}
\ccsdesc[500]{Computing methodologies~Natural language processing}

\keywords{Dense Retrieval, Embedding Compression, Dimensionality Reduction,
Spectral Compression}


\maketitle

\section{Introduction}

Dense retrieval supports semantic search over Web content by representing queries and documents as vectors and ranking documents by their similarity~\cite{KarpukhinOMLWEC20_DPR,LinALOLMY023_dragon}.
Recent encoders built on large language models produce embeddings~\cite{long2025diver,li2023gte_7b} with thousands of dimensions; for example, Qwen3-Embedding-8B and Harrier-27B produce embeddings of 4096 and 5376 dimensions, respectively~\cite{zhang2025qwen3embedding,harrier2026}.
For Web search systems, the storage and similarity-computation costs of these representations grow with both the embedding dimensionality and the collection size. Embedding compression addresses this deployment challenge by reducing representation size while preserving the information needed to rank relevant documents. We focus on post-hoc compression of existing embeddings, a practical setting when the encoder is fixed or accessible only through an API.\looseness=-1

Within this setting, learned compression methods, including conditional autoencoders~\cite{liu-etal-2022-dimension} and Matryoshka-Adaptor~\cite{yoon-etal-2024-matryoshka}, optimize transformations over existing embeddings. 
These methods offer flexible mappings but introduce additional training costs and optimization choices, such as learning rates and stopping criteria. For supervised variants, scarce relevance judgments can further constrain validation and model selection.
Spectral methods construct a fixed compression map directly from corpus statistics~\cite{raunak-etal-2019-effective}: they project embeddings onto the $k$ leading principal directions and rescale the projected coordinates.
Common choices assign weights proportional to $\lambda_i^{-\gamma}$ to the retained spectral directions, where $\lambda_i$ is the $i$-th eigenvalue and the shared exponent $\gamma\in[0,1]$ controls whitening degree~\cite{kexuefm-9079}. Varying $\gamma$ interpolates between PCA ($\gamma=0$), which leaves the projected coordinates unscaled~\cite{zhang-etal-2024-evaluating-unsupervised}, and Whitening ($\gamma=1$), which equalizes their variances~\cite{su2021whiteningsentence, huang-etal-2021-whiteningbert-easy}.
SpecTemp~\cite{LiEK26_SpecTemp} selects this exponent from the corpus spectrum and the target dimension.
These methods yield a transformation that requires no iterative model training and apply identically to document and query embeddings. Their weighting rules, however, are derived from corpus statistics without using relevance judgments for the target retrieval task.\looseness=-1 

Corpus variance describes how much a projected coordinate varies across documents, but does not directly measure whether its associated direction helps distinguish relevant documents from competing candidates. Within the retained principal subspace, a shared exponent shifts weight between high- and low-variance directions, while tying all weights to the same power-law trend. This rule cannot independently strengthen or suppress a direction according to its contribution to relevance. Relevance judgments provide evidence about these direction-specific contributions. Learned adaptors exploit such supervision to improve retrieval with fixed embeddings~\cite{yoon-etal-2024-search}, including compressed representations~\cite{yoon-etal-2024-matryoshka}.
We investigate how this relevance information can directly calibrate spectral weights through a closed-form update, retaining a single fixed compression map for document indexing and query compression.\looseness=-1

We propose \emph{Relevance-Guided Residual Calibration} (RRC), which refines the weights assigned by an existing spectral compressor using relevance judgments. We refer to the compressor's weights before calibration as the \emph{spectral prior}. RRC first estimates margin-based weights from training triples, accounting for each direction's mean contribution to the positive--negative score margin and its variance. These estimates contain both an overall spectral trend and direction-specific departures from that trend. RRC separates the two by fitting a power law to the positive estimated weights in log space. 
It then adds the residual to the prior's log weights on the positive support and sets the remaining weights to zero.
For a fixed spectral basis and training triples, the same residual can calibrate PCA, whitening, or SpecTemp. Calibration requires only moment estimation and a least-squares fit, and the resulting weights are incorporated into the existing compression map without adding a 
separate inference stage.

We evaluate RRC with three LLM-based encoders on nine datasets covering conventional and reasoning-intensive retrieval, at four target dimensions. 
SpecTemp\,+\,RRC achieves the highest mean effectiveness among the evaluated compression methods in both BEIR and BRIGHT dataset groups at every target dimension.
Its calibration gains are larger on the five evaluated BRIGHT datasets~\cite{su2024bright}, where it improves over SpecTemp by an average of 2.34 nDCG@10 points at a target dimension of 768.
Ablations show that adding RRC improves mean effectiveness at every tested power-law exponent. With SpecTemp as the prior, support selection and residual weighting both raise mean effectiveness, and retaining the prior's exponent yields a higher mean than directly using the estimated margin-based weights. Finally, experiments with scalar and product quantization further show that the calibration gains persist, improving the trade-off between per-document storage and retrieval effectiveness.\looseness=-1

Our contributions are threefold:

$\bullet$ We introduce RRC, a closed-form method that calibrates spectral compression using relevance judgments, without iterative model training, while retaining a single fixed compression map for document indexing and query compression.

$\bullet$ We develop a trend--residual decomposition that makes label-derived corrections reusable across different spectral weighting rules. The decomposition preserves the chosen prior's spectral trend and leaves the original weights unchanged when the estimated weights follow a power law over the leading $k$ principal directions.

$\bullet$ We analyse support selection, residual weighting, and spectral trend choice, and relate power-law fit quality to calibration gains. Experiments show mean effectiveness gains across tested priors and improved storage--effectiveness trade-offs under quantization.
\section{Related Work}
\label{sec:related-work}

Embedding compression seeks compact representations that preserve retrieval effectiveness. We review approaches to learning and encoding these representations, spectral transformations of existing embeddings, and relevance-aware coordinate weighting.

\subsection{Embedding Compression for Dense Retrieval}

Compact representations can be learned alongside an encoder or constructed from its existing outputs. Matryoshka Representation Learning~(MRL) trains nested representations with informative prefixes, enabling compression by prefix truncation~\cite{KusupatiBRWSRHC22_MRL}. With a fixed encoder, ConAE learns a conditional autoencoder for retrieval-oriented dimensionality reduction~\cite{liu-etal-2022-dimension}. Search-Adaptor uses relevance supervision to adapt existing embeddings~\cite{yoon-etal-2024-search}, while Matryoshka-Adaptor learns compressed representations at multiple dimensions through supervised or unsupervised objectives~\cite{yoon-etal-2024-matryoshka}. SMEC combines sequential Matryoshka training with adaptive dimension selection and cross-batch learning~\cite{zhang-etal-2025-smec}. CSR learns sparse codes over frozen embeddings as an alternative to Matryoshka-style truncation~\cite{pmlr-v267-wen25e}.
These approaches learn how information is distributed across output coordinates; RRC instead calibrates weights in a fixed principal subspace without iterative model training.\looseness=-1

Simpler alternatives reduce dimensionality without learning a compression module. Random projection uses a data-independent linear map~\cite{johnson1984extensions}, and random coordinate selection has been reported to preserve much of the retrieval effectiveness of the full embeddings~\cite{takeshita-etal-2025-randomly}.
Quantization is complementary to dimensionality reduction: it reduces storage by encoding vectors more compactly.
Product quantization~(PQ) encodes subvectors using separate codebooks~\cite{jegou2010product}, optimized product quantization~(OPQ) optimizes the space decomposition and codebooks~\cite{Ge2013OPQ}, and RaBitQ provides error-bounded distance estimation through randomized quantization~\cite{Gao2024RaBitQ}. PCA and PQ have also been combined in a retrieval compression pipeline~\cite{ma-etal-2021-simple}, and Section~\ref{sec:exp-quantization} examines how relevance calibration affects such combinations.

 \subsection{Spectral Transformations}

Unlike random maps, spectral compressors use the covariance structure of a collection to select principal directions and rescale the projected coordinates.
A common family projects embeddings onto the leading principal directions and assigns them weights proportional to $\lambda_i^{-\gamma}$, with PCA ($\gamma=0$) and whitening ($\gamma=1$) as endpoints and intermediate values giving partial whitening~\cite{kexuefm-9079}. PCA preserves the variances of the projected coordinates~\cite{raunak-etal-2019-effective,zhang-etal-2024-evaluating-unsupervised}, whereas whitening equalizes them and can amplify noise in low-variance directions~\cite{su2021whiteningsentence,huang-etal-2021-whiteningbert-easy};
Soft-ZCA uses eigenvalue regularization to control whitening strength~\cite{Diera2025SoftZCA}.
SpecTemp selects the exponent from the corpus spectrum and the target dimension~\cite{LiEK26_SpecTemp}.
PCA can also improve retrieval effectiveness in some domain-adaptation settings~\cite{zuo2026efficiencyembeddingcompress}.
In each of these weighting rules, the weight of a retained direction is a fixed function of its eigenvalue, derived from corpus statistics without relevance judgments for the target retrieval task. RRC uses these judgments to calibrate direction-specific weights while retaining the prior's spectral trend.

\subsection{Relevance-Aware Coordinate Weighting}

For retrieval, a useful coordinate is one that helps distinguish relevant documents from competing candidates, and several methods estimate this usefulness from feedback.
Dimension importance estimation~(DIME) uses pseudo-relevance feedback or generated pseudo-documents to select a query-dependent subspace of the original embeddings~\cite{Faggioli2024DIME,Faggioli2026GettingOffDIME}. Follow-up work examines its reproducibility and feedback aggregation~\cite{Campagnano2025UnveilingDIME}, contrasts positive with pseudo-irrelevant feedback~\cite{DErasmo2025Eclipse}, and determines the retained dimensions from a risk criterion~\cite{d-erasmo-etal-2026-statistical}. Learning to Select trains a predictor of dimension importance from labeled positive and negative documents and masks query coordinates at inference~\cite{wu-etal-2026-learning}. 
These methods select coordinates separately for each query while retaining full-dimensional document representations. RRC aggregates relevance evidence across training queries into shared weights in a fixed spectral basis, producing a single map that compresses both document and query embeddings.

The margin-based weights estimated by RRC (Section~\ref{sec:margin-reference}) have a mean--variance structure related to diagonal linear discriminant rules, which weight class-mean differences by inverse feature variances~\cite{Bickel2004Fisher,Fan2008FAIR}. 
RRC applies this weighting to coordinate-level retrieval margins and decomposes the estimated log weights into a fitted power-law trend and a direction-specific residual. Transferring the residual to a spectral prior separates the two sources of information: the chosen compressor provides the overall spectral trend, and relevance judgments supply the direction-specific adjustments.\looseness=-1 
\section{Methodology}
\label{sec:methodology}

Given fixed embeddings $\mathbf{x}_t\in\mathbb{R}^{D}$, where $t$ denotes a query or document, we seek a shared mapping $T_k:\mathbb{R}^{D}\rightarrow\mathbb{R}^{k}$, with $k<D$, that reduces representation size while preserving retrieval effectiveness. We consider post-hoc compression, keeping the embedding encoder frozen.
We first formulate spectral compression through direction-wise weights, then introduce RRC to calibrate these weights using relevance judgments from a small set of training queries.

\subsection{Spectral Compression and Prior Weights}
\label{sec:spectral-compression}
 
From $n$ sampled corpus embeddings, we compute the mean $\boldsymbol{\mu}$ and eigendecompose the centred covariance:
\begin{equation}
    \mathbf{C}
    =\frac{1}{n-1}\sum_{j=1}^{n}
      (\mathbf{x}_{d_j}-\boldsymbol{\mu})
      (\mathbf{x}_{d_j}-\boldsymbol{\mu})^{\top}
    =\mathbf{U}\boldsymbol{\Lambda}\mathbf{U}^{\top},
    \label{eq:covariance}
\end{equation}
where $\boldsymbol{\Lambda}=\operatorname{diag}(\lambda_1,\ldots,\lambda_D)$ and $\lambda_1\geq\cdots\geq\lambda_D\geq0$. 
For a target dimensionality $k \leq \operatorname{rank}(\mathbf C)$, the leading eigenvectors $\mathbf U_k \in \mathbb R^{D\times k}$ define a shared basis, yielding the projected coordinates
\begin{equation}
    \mathbf{c}(t)=\mathbf{U}_k^{\top}
      (\mathbf{x}_t-\boldsymbol{\mu})\in\mathbb{R}^{k}.
    \label{eq:coordinates}
\end{equation}
A nonnegative, nonzero score-weight vector $\mathbf{w}=(w_1,\ldots,w_k)^{\top}$ determines the compressed embedding and its inner-product score:
\begin{align}
    \mathbf{y}_{\mathbf{w}}(t)
    &=\operatorname{diag}(\sqrt{w_1},\ldots,\sqrt{w_k})\,\mathbf{c}(t),
    \label{eq:representation}\\
    S_{\mathbf{w}}(q,d)
    &=\mathbf{y}_{\mathbf{w}}(q)^{\top}\mathbf{y}_{\mathbf{w}}(d)
      =\sum_{i=1}^{k}w_i\,c_i(q)c_i(d).
    \label{eq:score}
\end{align}
Thus, $\sqrt{w_i}$ scales a coordinate and $w_i$ weights its score contribution. Cosine retrieval additionally normalises the compressed vectors. A common positive scaling of $\mathbf{w}$ leaves either ranking rule unchanged.
 
We call the positive weights of an existing spectral compressor its \emph{spectral prior}, denoted $\mathbf{p}$. Common priors belong to the power-law family\looseness=-1
\begin{equation}
    p_i(k)=\lambda_i^{-\gamma(k)},\qquad i=1,\ldots,k.
    \label{eq:priors}
\end{equation}
PCA uses $\gamma(k)=0$, whitening uses $\gamma(k)=1$, and fixed-exponent tempering uses $\gamma(k)=\gamma_0\in(0,1)$. 
SpecTemp~\cite{LiEK26_SpecTemp} instead selects $\gamma(k)$ from the corpus spectrum, applying full whitening up to the knee of a spectral signal-to-noise curve and progressively weaker whitening at larger target dimensions (see Appendix~\ref{app:spectemp}).
Without calibration, $\mathbf{w}=\mathbf{p}$.

For the power-law spectral priors in Eq.~\eqref{eq:priors}, a shared exponent ties the weights of the leading $k$ principal directions to the same spectral trend.
We next introduce RRC, which uses relevance judgments to complement this trend with direction-specific corrections.

\subsection{Relevance-Guided Residual Calibration}
\label{sec:residual-calibration}

We use relevance judgments from a small set of training queries to calibrate a spectral prior for unseen-query compression, keeping the spectral basis fixed. Since the common scale of the weights does not affect rankings, we adjust their relative magnitudes in log space.
This also makes a power-law trend linear in the log eigenvalues, allowing us to separate it from direction-specific residuals. We add the label-derived residual to the prior's log weights:
\begin{equation}
    \boxed{\;
    \log w_i(k)
    =\underbrace{\log p_i(k)}_{\text{spectral prior}}
     +\underbrace{u_i(k)}_{\text{label residual}},
    \quad i\in\mathcal{I}_{+}(k).
    \;}
    \label{eq:calibration}
\end{equation}
Here, $\mathcal{I}_{+}$ is the positive support of the estimated margin-based weights, and $u_i$ is their log-space residual after removing the fitted power-law trend. 
Equivalently, $w_i=p_i\,e^{u_i}$ for $i\in\mathcal{I}_{+}$, and $w_i=0$ outside this support.

For a fixed spectral basis and training triples, the same residual can calibrate different positive spectral priors, yielding PCA\,+\,RRC, SpecTemp\,+\,RRC, and analogous combinations. Below, we fix the target dimension $k$ and suppress its notation.

\subsubsection{A Margin-Based Weighting Criterion}
\label{sec:margin-reference}

For a triple $(q,d^+,d^-)$ consisting of a query, a judged relevant document, and a sampled negative, define the unweighted margin contribution of direction $i$:
\begin{equation}
    z_i=c_i(q)\bigl(c_i(d^{+})-c_i(d^{-})\bigr),
    \qquad
    m_i=\mathbb{E}[z_i],\quad v_i=\operatorname{Var}(z_i).
    \label{eq:coordinate-margin}
\end{equation}
The expectations use a query-balanced triple distribution~(defined in Section~\ref{sec:margin-estimation}).
The score margin of the triple under Eq.~\eqref{eq:score} is $\sum_i w_i z_i$, with mean $\sum_i w_i m_i$ and variance $\sum_{i,j}w_iw_j\operatorname{Cov}(z_i,z_j)$. Under a normal approximation, the probability that the relevant document outscores the negative increases monotonically with the ratio of this mean to its standard deviation.
To limit the number of second-order quantities estimated from the judged queries, we approximate the covariance of these margin contributions by its diagonal. This reduces the number of covariance parameters from $k(k+1)/2$ to $k$ and gives the standardised-margin surrogate:
\begin{equation}
    A(\mathbf{w})
    =\frac{\sum_{i=1}^{k}w_im_i}
           {\sqrt{\sum_{i=1}^{k}w_i^{2}v_i}}.
    \label{eq:criterion}
\end{equation}
We use this inner-product surrogate to estimate the weights that determine RRC's positive support and residual corrections.
The maximiser of $A(\mathbf{w})$ has a closed form, obtained through the Cauchy--Schwarz inequality (see Appendix~\ref{app:margin-optimum-proof}).

\begin{proposition}[Margin-Based Spectral Weighting]
\label{prop:margin-optimum}
Suppose $v_i>0$ for all $i=1,\ldots,k$ and $m_i>0$ for at least one direction. The maximisers of $A(\mathbf{w})$ over nonnegative, nonzero weights are given, up to a common positive factor, by
\begin{equation}
    e_i\propto\frac{(m_i)_{+}}{v_i},
    \qquad i=1,\ldots,k,
    \label{eq:margin-weights}
\end{equation}
and attain $A(\mathbf{e})=
\sqrt{\sum_{i=1}^{k}(m_i)_{+}^{2}/v_i}$.
\end{proposition}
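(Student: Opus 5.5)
The plan is to bound the numerator of $A(\mathbf{w})$ in two steps, first dropping the negative parts of the margins and then applying Cauchy--Schwarz in the metric induced by the variances $v_i$. Let $\mathbf{w}\geq 0$ be nonzero; since $v_i>0$, the denominator $\sqrt{\sum_i w_i^2 v_i}$ is strictly positive, so $A$ is well defined. Because $w_i\geq 0$ and $m_i\leq (m_i)_+$,
\begin{equation*}
\sum_{i=1}^{k} w_i m_i \;\leq\; \sum_{i=1}^{k} w_i (m_i)_+ \;=\; \sum_{i=1}^{k} \bigl(w_i\sqrt{v_i}\bigr)\,\frac{(m_i)_+}{\sqrt{v_i}} \;\leq\; \sqrt{\sum_{i=1}^{k} w_i^2 v_i}\;\sqrt{\sum_{i=1}^{k}\frac{(m_i)_+^2}{v_i}}.
\end{equation*}
Dividing by the denominator gives $A(\mathbf{w})\leq \sqrt{\sum_i (m_i)_+^2/v_i}=:A^\ast$. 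This bound is strictly positive because some $m_i>0$.

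Next I would verify attainment. Set $e_i=(m_i)_+/v_i$; this vector is nonnegative, and nonzero by hypothesis. Then $\sum_i e_i m_i=\sum_i (m_i)_+^2/v_i$, since the terms with $m_i\le 0$ vanish, and likewise $\sum_i e_i^2 v_i=\sum_i (m_i)_+^2/v_i$. Hence $A(\mathbf{e})=A^\ast$. Since $A$ is invariant under positive rescaling of $\mathbf{w}$, every $c\,\mathbf{e}$ with $c>0$ is also a maximiser.

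The step needing the most care is the converse: every maximiser is a positive multiple of $\mathbf{e}$. I would handle it by tracking when both inequalities above are tight. Suppose $A(\mathbf{w})=A^\ast>0$.

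\emph{Equality in Cauchy--Schwarz.} This requires $(w_i\sqrt{v_i})_i$ and $((m_i)_+/\sqrt{v_i})_i$ to be linearly dependent. The second vector is nonzero, so $w_i\sqrt{v_i}=c\,(m_i)_+/\sqrt{v_i}$ for some scalar $c$, that is, $w_i=c\,e_i$.

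\emph{Sign of $c$.} Because $\mathbf{w}$ is nonzero and nonnegative, and because $A^\ast>0$ forces the numerator to be positive, we get $c>0$.

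\emph{Consistency with the first inequality.} Equality in the first step requires $w_i m_i=w_i(m_i)_+$, i.e.\ $w_i=0$ whenever $m_i<0$. This is automatically satisfied by $w_i=c\,e_i$.

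The two equality conditions together therefore characterise exactly the set $\{c\,\mathbf{e}: c>0\}$. This matches Eq.~\eqref{eq:margin-weights}, with attained value $A(\mathbf{e})=\sqrt{\sum_i (m_i)_+^2/v_i}$.
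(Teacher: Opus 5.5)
Your proposal is correct and follows essentially the same route as the paper's proof. Both replace $m_i$ by $(m_i)_{+}$, apply Cauchy--Schwarz to the vectors $(\sqrt{v_i}\,w_i)_i$ and $\bigl((m_i)_{+}/\sqrt{v_i}\bigr)_i$, and characterise the maximisers as exactly the positive multiples of $\mathbf{e}$ from the equality conditions, with equality in the first step holding automatically.
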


These \emph{margin-based weights} favour directions with positive mean contributions and lower variability, while assigning zero weight to directions with nonpositive mean margins. We estimate them from training judgments, then extract their departures from a power-law trend to calibrate the chosen prior.

\subsubsection{Query-Balanced Estimation}
\label{sec:margin-estimation}

To estimate the margin means and variances in Proposition~\ref{prop:margin-optimum}, let $\mathcal{Q}_{\mathrm{tr}}$ contain $Q$ labeled training queries~(the calibration query budget).
For each query $q$, we construct $n_q$ positive--negative pairs from its relevance judgments and sampled negatives, and denote  the margin contribution of direction $i$ in the $j$-th pair by $z_{qj,i}$, with $j=1,\ldots,n_q$.
We first average within each query and then across queries, giving each query equal weight regardless of its number of sampled pairs. 
For a function $f$ of $z_i$, this query-balanced average is
\begin{equation}
    \widehat{\mathbb{E}}[f(z_i)]
    =\frac{1}{Q}\sum_{q\in\mathcal{Q}_{\mathrm{tr}}}\frac{1}{n_q}
      \sum_{j=1}^{n_q}f(z_{qj,i}),
    \label{eq:qb-average}
\end{equation}
and the moment estimates are
\begin{equation}
    \widehat{m}_i=\widehat{\mathbb{E}}[z_i],
    \qquad
    \widehat{v}_i=\widehat{\mathbb{E}}\bigl[(z_i-\widehat{m}_i)^2\bigr].
    \label{eq:est-moments}
\end{equation}
Eq.~\eqref{eq:est-moments} gives the plug-in estimates of the population
mean $m_i$ and variance $v_i$ in Eq.~\eqref{eq:coordinate-margin} under the
query-balanced distribution.
Substituting them into Eq.~\eqref{eq:margin-weights} gives the estimated margin-based weights\looseness=-1
\begin{equation}
\widehat{e}_i=\frac{(\widehat{m}_i)_{+}}{\widehat{v}_i}.
    \label{eq:empirical-weights-hat}
\end{equation}
The negative-sampling setup is detailed in Appendix~\ref{app:implementation-details}.\looseness=-1

\subsubsection{Separating the Spectral Trend from the Residual}
\label{sec:residual-decomposition}
We use the estimated margin-based weights $\widehat e_i$ in two ways: \emph{support selection} retains directions with positive estimates, and \emph{residual weighting} uses their log-space residuals to reweight the prior within this support.
Using $\widehat{e}_i$ directly would also replace the chosen compressor's overall spectral trend. We instead retain that trend and extract the direction-specific departures from a power law.

\emph{Support selection} defines the positive support $\mathcal{I}_{+}=\{i\leq k:\widehat{e}_i>0\}$.
On this support, we fit a power law to the estimated weights in log space, where the spectral trend becomes linear in $\log\lambda_i$.
Assuming the log eigenvalues vary across this support, least squares gives a unique intercept $a$ and exponent $\gamma_{\mathrm e}$:
\begin{equation}
    (a,\gamma_{\mathrm e}) =
    \operatorname*{arg\,min}_{a,\gamma\in\mathbb{R}} \sum_{i\in\mathcal{I}_{+}} \bigl(\log\widehat{e}_i-a+\gamma\log\lambda_i\bigr)^2.
    \label{eq:powerlaw-fit}
\end{equation}
Thus, $\exp(a)\lambda_i^{-\gamma_{\mathrm e}}$ is the least-squares power-law fit to the estimated weights on $\mathcal{I}_{+}$ in log space. 
We subtract this fitted trend to obtain direction-specific residuals, which we then combine with the chosen prior's log weights. 
The residual is
\begin{equation}
    u_i
    =\log\widehat{e}_i
     -\bigl(a-\gamma_{\mathrm e}\log\lambda_i\bigr),
    \qquad i\in\mathcal{I}_{+}.
    \label{eq:residual}
\end{equation}
The least-squares normal equations give
\begin{equation}
    \sum_{i\in\mathcal{I}_{+}}u_i=0,
    \qquad
    \sum_{i\in\mathcal{I}_{+}}u_i\log\lambda_i=0.
    \label{eq:residual-orthogonality}
\end{equation}
The residual is therefore orthogonal to both the constant vector and the log-eigenvalue vector on $\mathcal{I}_{+}$: it contains neither a common log-weight offset nor a linear trend in $\log\lambda_i$. Multiplying all estimated weights by a common positive constant changes only the fitted intercept and leaves the residual unchanged.

Different power-law priors rescale the same spectral coordinates.
We therefore examine whether this rescaling changes the residual
estimated from the margin contributions.

\begin{proposition}[Invariance to Power-Law Rescaling]
\label{prop:residual-invariance}
Fix the retained spectral basis, its eigenvalues, and the training triples. Let $\widetilde{c}_i(t)=\lambda_i^\alpha c_i(t)$ for $\alpha\in\mathbb{R}$, and compute the margin contributions directly from these rescaled coordinates. Then the positive support $\mathcal{I}_{+}$ and the residual $(u_i)_{i\in\mathcal{I}_{+}}$ remain unchanged, while the fitted exponent becomes $\gamma_{\mathrm e}+2\alpha$.
\end{proposition}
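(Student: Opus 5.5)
The plan is to track how each quantity in the RRC pipeline transforms under the rescaling $\widetilde{c}_i(t)=\lambda_i^{\alpha}c_i(t)$, which applies to queries and documents alike. The sequence is: margin contributions, their query-balanced moments, the estimated margin-based weights, the support, the power-law fit, and finally the residual.

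\textbf{Margin contributions and moments.} For any triple, the rescaled margin contribution is
\[
\widetilde z_i=\widetilde c_i(q)\bigl(\widetilde c_i(d^+)-\widetilde c_i(d^-)\bigr)=\lambda_i^{2\alpha}z_i .
\]
This holds pointwise for every pair $z_{qj,i}$. The query-balanced average in Eq.~\eqref{eq:qb-average} is linear, so $\widetilde{\widehat m}_i=\lambda_i^{2\alpha}\widehat m_i$. Substituting this into Eq.~\eqref{eq:est-moments} gives $\widetilde{\widehat v}_i=\lambda_i^{4\alpha}\widehat v_i$.

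\textbf{Estimated weights and support.} Since $\lambda_i^{2\alpha}>0$, we have $(\lambda_i^{2\alpha}\widehat m_i)_+=\lambda_i^{2\alpha}(\widehat m_i)_+$. Hence Eq.~\eqref{eq:empirical-weights-hat} gives $\widetilde{\widehat e}_i=\lambda_i^{-2\alpha}\widehat e_i$. This requires the eigenvalues on the retained directions to be positive, which holds because $k\le\operatorname{rank}(\mathbf C)$; the same positivity is also needed for $\log\lambda_i$ to be defined. Because the multiplier is strictly positive, $\widetilde{\widehat e}_i>0$ if and only if $\widehat e_i>0$, so $\mathcal I_+$ is unchanged. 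I will also note that $\widehat v_i>0$ is preserved, so the weights stay well defined.

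\textbf{Fit and residual.} On $\mathcal I_+$, taking logs gives $\log\widetilde{\widehat e}_i=\log\widehat e_i-2\alpha\log\lambda_i$. The objective in Eq.~\eqref{eq:powerlaw-fit} for the rescaled data at $(a,\gamma)$ equals the original objective at $(a,\gamma+2\alpha)$. The least-squares minimiser is unique under the assumption that the log eigenvalues vary on $\mathcal I_+$, and this assumption is unchanged since the support is the same. Therefore the new minimiser is $(a,\gamma_{\mathrm e}+2\alpha)$.

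Substituting into Eq.~\eqref{eq:residual}, the residual becomes
\[
\log\widehat e_i-2\alpha\log\lambda_i-\bigl(a-(\gamma_{\mathrm e}+2\alpha)\log\lambda_i\bigr)=u_i .
\]
An alternative route is to note that the residual is the orthogonal projection of the log-weight vector off $\operatorname{span}\{\mathbf 1,\log\boldsymbol\lambda\}$, which follows from Eq.~\eqref{eq:residual-orthogonality}. Adding a multiple of $\log\boldsymbol\lambda$ does not change that projection.

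\textbf{Main obstacle.} There is no deep obstacle. The only care points are the two positivity facts: first, that $\lambda_i^{2\alpha}>0$ commutes with $(\cdot)_+$; second, that uniqueness of the fit transfers, which is handled by the reparametrisation $\gamma\mapsto\gamma+2\alpha$ being a bijection. I will also state explicitly that the variance is centred at the rescaled mean. This is what yields the clean factor $\lambda_i^{4\alpha}$.
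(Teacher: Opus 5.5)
Your proposal is correct and follows essentially the paper's argument. It uses the same chain $\widetilde z_i=\lambda_i^{2\alpha}z_i$ and $\widetilde{\widehat e}_i=\lambda_i^{-2\alpha}\widehat e_i$, keeps the support unchanged because $\lambda_i>0$, and lets the fitted trend absorb the shift $-2\alpha\log\lambda_i$; your ``alternative route'' via orthogonality to $\operatorname{span}\{\mathbf 1,\log\boldsymbol{\lambda}\}$ is exactly how the paper justifies this step. One slip needs fixing: the rescaled objective at $(a,\gamma)$ equals the original objective at $(a,\gamma-2\alpha)$, not $(a,\gamma+2\alpha)$. As written, your identity would give the minimiser $(a,\gamma_{\mathrm e}-2\alpha)$, whereas the corrected identity yields your stated and correct conclusion $(a,\gamma_{\mathrm e}+2\alpha)$.
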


The rescaling shifts the estimated log weights by $-2\alpha\log\lambda_i$, which the fitted power-law trend absorbs entirely. 
The residual therefore remains unchanged; the full derivation is in Appendix~\ref{app:residual-invariance-proof}.\looseness=-1

\begin{algorithm}[t]
\caption{Relevance-Guided Residual Calibration (RRC)}
\label{alg:rrc}
\begin{algorithmic}[1]
\Require Document embeddings; training query embeddings and relevance
judgments; target dimension $k$; spectral prior rule.
\Ensure A fixed compression map $T_k$.

\State Compute the corpus mean $\boldsymbol{\mu}$, spectral basis $\mathbf{U}_k$, and eigenvalues $\{\lambda_i\}_{i=1}^{k}$ using Eq.~\eqref{eq:covariance}, and obtain the positive prior weights $\{p_i\}_{i=1}^{k}$ from the chosen compressor.

\State Construct training triples and compute their spectral coordinates using Eq.~\eqref{eq:coordinates}.

\State Estimate the query-balanced margin means $\widehat{m}_i$ and variances $\widehat{v}_i$ using Eq.~\eqref{eq:est-moments}.

\State Compute $\widehat{e}_i\gets(\widehat{m}_i)_{+}/\widehat{v}_i$ and perform \emph{support selection}: $\mathcal{I}_{+}\gets\{i\leq k:\widehat{e}_i>0\}$.

\State Fit the log-linear spectral trend on $\mathcal{I}_{+}$ and compute $u_i$ using Eqs.~\eqref{eq:powerlaw-fit} and~\eqref{eq:residual}.

\State Apply \emph{residual weighting}: set $w_i\gets p_i e^{u_i}$ for $i\in\mathcal{I}_{+}$, and $w_i\gets0$ otherwise.

\State \Return
$T_k:\mathbf{x}\mapsto
\operatorname{diag}(\sqrt{w_1},\ldots,\sqrt{w_k})
\mathbf{U}_k^{\top}(\mathbf{x}-\boldsymbol{\mu})$.
\end{algorithmic}
\end{algorithm}

\subsubsection{Calibrating a Spectral Prior}
\label{sec:prior-calibration}

We calibrate the chosen spectral prior by adding the residual from Eq.~\eqref{eq:residual} to its log weights, as defined in Eq.~\eqref{eq:calibration}.
Residual weighting thus multiplies each prior weight in $\mathcal{I}_{+}$ by $e^{u_i}$, increasing it when $u_i>0$ and decreasing it when $u_i<0$. Support selection assigns zero weight to directions outside $\mathcal{I}_{+}$.
Algorithm~\ref{alg:rrc} summarises the complete calibration procedure. \looseness=-1

For a fixed spectral basis and training triples, we compute the residual once in the coordinates of Eq.~\eqref{eq:coordinates} and reuse it across positive spectral priors, including all choices in Eq.~\eqref{eq:priors}. 
For a power-law prior, the coordinate scaling $c_i(t)\mapsto\lambda_i^{-\gamma/2}c_i(t)$ is the case $\alpha=-\gamma/2$ of Proposition~\ref{prop:residual-invariance}. Computing margin contributions directly after this scaling therefore yields the same residual.

The resulting calibration has two useful properties. First, for a power-law prior $p_i\propto\lambda_i^{-\gamma_{\mathrm p}}$, Eq.~\eqref{eq:residual-orthogonality} ensures that fitting a power law to the calibrated weights on $\mathcal{I}_{+}$ recovers $\gamma_{\mathrm p}$. We therefore adjust individual directions while preserving the prior's fitted exponent on this support. Second, if the estimated margin-based weights follow a power law over all of the leading $k$ principal directions, the residual is zero and calibration recovers the prior. Section~\ref{sec:exp-ablation} compares retaining the prior's trend with directly using the estimated margin-based weights.\looseness=-1

RRC reuses the compressor's spectral decomposition and is estimated once from training queries. Given training embeddings in the retained spectral coordinates, computing the margin means and variances costs $O\!\left(k\sum_q n_q\right)$, and trend fitting and removal cost $O(k)$. 
Folding the calibrated weights into Eq.~\eqref{eq:representation} yields a fixed map for document indexing and unseen-query compression, adding no per-vector computation beyond the underlying spectral compressor and requiring no test-time relevance feedback.
Given the spectral prior and training triples, RRC introduces no additional hyperparameters. \looseness=-1

\section{Experiments}
We examine whether RRC improves spectral compression (Section~\ref{sec:exp-main}), how its components contribute (Section~\ref{sec:exp-ablation}), and whether its gains persist after quantization (Section~\ref{sec:exp-quantization}).

\subsection{Experimental Setup}

\subsubsection{Datasets}

We evaluate on nine datasets spanning conventional and reasoning-intensive retrieval. The conventional group comprises MS MARCO Passage Ranking~\cite{bajaj2016ms} for web search and three datasets from BEIR~\cite{Thakur2021BEIR}: Natural Questions (\textbf{NQ})~\cite{KwiatkowskiPRCP19_NQ} for open-domain question answering, \textbf{HotpotQA}~\cite{Yang0ZBCSM18_HotpotQA} for multi-hop question answering, and \textbf{FiQA}~\cite{MaiaHFDMZB18_fiqa} for financial retrieval. For brevity, we refer to these four conventional retrieval datasets collectively as the \emph{\textbf{BEIR} group} throughout the experiments.

For reasoning-intensive retrieval, we select five datasets that cover the three domains of BRIGHT~\cite{su2024bright}: StackExchange (\textbf{Biology}, \textbf{Earth Science}, and \textbf{Economics}), coding (\textbf{LeetCode}), and theorem-based retrieval (\textbf{TheoremQA-T}). StackExchange contributes three datasets because it accounts for seven of BRIGHT's twelve.
We use the MS MARCO development queries and the test queries of the remaining
datasets.

\subsubsection{Retrieval Models}
\label{sec:models}

We use three open-source LLM-based dense retrievers: \textbf{Qwen3-8B}~\cite{zhang2025qwen3embedding}\footnote{\url{https://huggingface.co/Qwen/Qwen3-Embedding-8B}}, \textbf{GTE-7B}~\cite{li2023gte_7b}\footnote{\url{https://huggingface.co/Alibaba-NLP/gte-Qwen2-7B-instruct}}, and \textbf{DIVER-4B}~\cite{long2025diver}\footnote{\url{https://huggingface.co/AQ-MedAI/Diver-Retriever-4B}} (see Table~\ref{tab:models}). They differ in embedding dimension, support for Matryoshka Representation Learning (MRL), and whether they are trained for reasoning-intensive retrieval, covering a range of encoder properties relevant to compression.

\begin{table}[t]
\centering

\caption{Retrieval model statistics (Params: total parameters; MRL: Matryoshka Representation Learning support).}
\label{tab:models}
\setlength{\tabcolsep}{2pt}
\renewcommand\arraystretch{0.9}
\begin{tabular}{lcccccc}
\hline
\textbf{Model} & \textbf{Params} & \textbf{Dim} &\textbf{Length} & \textbf{MRL} & \textbf{Reasoning} & \textbf{Release} \\ \hline
Qwen3-8B       & 8.0B            & 4096    &32k      & \checkmark   & \ding{55} & Jun 2025         \\
GTE-7B         & 7.0B            & 3584   &32k  & \ding{55}    & \ding{55}    & Jun 2024         \\
DIVER-4B          & 4.0B           & 2560   &40k      & \ding{55}  & \checkmark  & Sep 2025         \\ 

\hline
\end{tabular}
\begin{flushleft}
\end{flushleft}
\vspace{-0.4cm}
\end{table}

\subsubsection{Baselines}
\label{sec:baselines}

We compare with post-hoc compression methods that keep the embedding encoder fixed, using the \textbf{Full-Dimensional Embeddings} as a reference. \textbf{Prefix Truncation} retains the first $k$ coordinates, as is standard in MRL-compatible models. \textbf{Random Truncation} uniformly selects $k$ coordinates without replacement~\cite{takeshita-etal-2025-randomly}, while \textbf{Random Projection} applies a Gaussian projection to $k$ dimensions.
For spectral methods, we evaluate \textbf{PCA} ($\gamma=0$), \textbf{Whitening} ($\gamma=1$), and \textbf{SpecTemp}~\cite{LiEK26_SpecTemp}, which selects the exponent from the corpus spectrum and target dimension.
We also include \textbf{Matryoshka-Adaptor}~\cite{yoon-etal-2024-matryoshka} in its unsupervised (\textbf{Matryoshka-Adaptor-U}) and supervised (\textbf{Matryoshka-Adaptor-S}) variants, which learn an adaptor over fixed embeddings at multiple output dimensions.

We apply RRC to each spectral method, yielding \textbf{PCA\,+\,RRC}, \textbf{Whitening\,+\,RRC}, and \textbf{SpecTemp\,+\,RRC}. 
Each calibrated method shares the spectral basis of its underlying compressor and uses relevance judgments to select active directions and calibrate their weights.
We also evaluate two full-dimensional references: PCA ($k{=}D$), which centres and rotates embeddings without truncation, and PCA ($k{=}D$)\,+\,RRC, which applies RRC within the full PCA basis.
Quantization methods and their combinations with spectral compression are evaluated separately in Section~\ref{sec:exp-quantization}.

\subsubsection{Evaluation Protocol}
\label{sec:setup-protocol}

We evaluate compression methods at target dimensions $k\in\{768,512,256,128\}$. We report \textbf{MRR@10} for MS MARCO and \textbf{nDCG@10} for the remaining datasets, with scores multiplied by 100. Retrieval uses cosine similarity: we $\ell_2$-normalise the original embeddings before compression and normalise the compressed vectors before scoring.
 
As not all datasets have a training split, RRC and supervised Matryoshka-Adaptor use five-fold cross-fitting over the evaluation queries with shared query partitions.
For each fold, we fit the transformation using judgments from the other four folds and apply it to both the held-out queries and the document corpus. We pool the held-out query results across folds to compute the final score. Each BRIGHT training fold contains 60--114 judged queries. 
The spectral basis and unsupervised adaptor are estimated from corpus embeddings without relevance labels.
We repeat experiments with seeds 1999, 5, and 2026 and average scores over seeds for each encoder--dataset configuration. 
We test per-dataset differences with a paired $t$-test over queries and aggregate differences across configurations with a Wilcoxon signed-rank test, which does not assume normality for the small number of heterogeneous configuration-level differences.
Aggregate results give equal weight to these configurations. Implementation details, including spectral estimation, negative sampling, and adaptor training, are provided in Appendix~\ref{app:implementation-details}. \looseness=-1

\subsection{Main Results}
\label{sec:exp-main}
We first assess whether RRC improves existing spectral compressors, then examine how calibration gains relate to the power-law fit of the label-estimated weights.

\begin{table*}[t]

\caption{Mean effectiveness ($\times100$) on the BEIR group (12 encoder--dataset configurations; MRR@10 on MS MARCO, nDCG@10 otherwise) and BRIGHT (15 configurations; nDCG@10). Group means and Win/Tie/Loss counts use configuration scores averaged over three seeds and rounded to two decimal places; ties indicate equality at that precision. Gray cells show full-dimensional references; bold marks the best compressed result in each column. Win/Tie/Loss compares each calibrated method with its corresponding uncalibrated prior. Counts cover 12 BEIR and 15 BRIGHT configurations for PCA ($k{=}D$)\,+\,RRC, and are summed over the four target dimensions for compressed methods (48 and 60 comparisons, respectively).}

\label{tab:main}
\setlength{\tabcolsep}{6pt}
\renewcommand\arraystretch{0.8}
\begin{tabular}{l cccc c cccc c}
\toprule
\multirow[b]{2}{*}{\raisebox{-0.8ex}{\makecell[c]{\makebox[3.2cm][c]{{Method} $\downarrow$ \hfill $k$ $\rightarrow$}}}}  & \multicolumn{5}{c}{BEIR group} & \multicolumn{5}{c}{BRIGHT} \\
\cmidrule(lr){2-6}\cmidrule(lr){7-11}
& 768 & 512 & 256 & 128 & Win/Tie/Loss & 768 & 512 & 256 & 128 & Win/Tie/Loss \\
\midrule
Full Dimension
  & \multicolumn{4}{c}{\cellcolor{gray!20}{56.52}} &
  & \multicolumn{4}{c}{\cellcolor{gray!20}{31.99}} & \\
  PCA ($k{=}D$)
  & \multicolumn{4}{c}{\cellcolor{gray!20}56.16} & --
  & \multicolumn{4}{c}{\cellcolor{gray!20}30.89} & -- \\
PCA ($k{=}D$) + RRC
  & \multicolumn{4}{c}{\cellcolor{gray!20}56.43} & 8/1/3
  & \multicolumn{4}{c}{\cellcolor{gray!20}{33.13}} & 10/0/5 \\
\midrule
Prefix Truncation    & 54.31 & 52.40 & 46.78 & 40.05 & --
                     & 30.94 & 29.62 & 25.53 & 19.81 & -- \\
Random Truncation    & 54.58 & 53.34 & 49.79 & 42.69 & --
                     & 30.00 & 29.05 & 26.09 & 21.02 & -- \\
Random Projection    & 54.44 & 53.39 & 50.18 & 43.80 & --
                     & 30.49 & 29.07 & 26.48 & 21.99 & -- \\
Matryoshka-Adaptor-U & 54.77 & 53.49 & 50.29 & 45.03 & --
                     & 30.68 & 29.74 & 26.93 & 23.00 & -- \\
Matryoshka-Adaptor-S & 55.53 & 54.48 & 51.90 & 47.38 & --
                     & 29.26 & 28.74 & 27.02 & 24.96 & -- \\
\midrule
PCA                  & 55.35 & 54.42 & 51.54 & 46.43 & --
                     & 29.91 & 29.19 & 27.38 & 24.34 & -- \\
PCA + RRC            & 55.63 & 54.64 & 51.80 & 46.66 & 28/4/16
                     & 32.20 & 31.49 & 28.97 & 25.21 & 37/0/23 \\
\midrule
Whitening            & 53.34 & 53.58 & 51.93 & 47.47 & --
                     & 29.34 & 29.72 & 28.28 & 25.33 & -- \\
Whitening + RRC      & 53.40 & 53.64 & 52.09 & 47.65 & 24/1/23
                     & 30.44 & 30.44 & 28.68 & 25.57 & 38/0/22 \\
\midrule
SpecTemp             & 55.66 & 54.95 & 52.48 & 47.71 & --
                     & 30.19 & 29.75 & 28.34 & 25.61 & -- \\
SpecTemp + RRC       & \textbf{55.87} & \textbf{55.14} & \textbf{52.76} & \textbf{48.00} & 29/2/17
                     & \textbf{32.53} & \textbf{31.96} & \textbf{29.99} & \textbf{26.66} & 41/0/19 \\
\bottomrule
\end{tabular}
\end{table*}

\subsubsection{Overall Comparison}   
Table~\ref{tab:main} summarises mean effectiveness on the BEIR group and BRIGHT, including full-dimensional references; Tables~\ref{tab:main_results_beir} and~\ref{tab:main_results_bright} in the appendix report per-dataset results.
Adding RRC improves the mean effectiveness of PCA, whitening, and SpecTemp at all four target dimensions in both groups, and SpecTemp\,+\,RRC achieves the highest mean among the compression methods in every column.
At $k=768$, it is 0.65 points below the original embeddings on the BEIR group and 0.54 points above them on BRIGHT.

The full-dimensional references separate centring, calibration, and truncation. Centring alone lowers both group means, whereas RRC improves the centred representations. For PCA, the mean calibration gains at $k=768$ are comparable to those at $k=D$, showing that these gains are retained after truncation in both groups. At $k=768$, SpecTemp\,+\,RRC comes within 0.6 points of PCA ($k{=}D$)\,+\,RRC in both groups while using substantially fewer coordinates. On BRIGHT, uncalibrated SpecTemp remains below the original embeddings; adding RRC raises its mean effectiveness above them.\looseness=-1

Matryoshka-Adaptor-S achieves substantial gains in some configurations, including Qwen3 on Biology, but its effectiveness varies widely in the label-scarce BRIGHT setting.
On the BEIR group, it outperforms its unsupervised counterpart in mean effectiveness at every target dimension, but SpecTemp\,+\,RRC still achieves a higher group mean at every target dimension.
Prefix truncation remains strong on DIVER--Biology, where it outperforms SpecTemp\,+\,RRC at three of the four target dimensions. 
A relevant distinction is that RRC rescales centred spectral coordinates, whereas prefix truncation retains the first $k$ coordinates of the original embeddings without centring.

Calibration gains are larger on BRIGHT, where SpecTemp\,+\,RRC improves over SpecTemp by 2.34 points on average at $k=768$. Changes are generally smaller on the BEIR group. We next examine how this variation relates to the structure of the label-estimated weights.

 \begin{figure}[!t]
  \centering
  \includegraphics[width=0.9273\columnwidth]{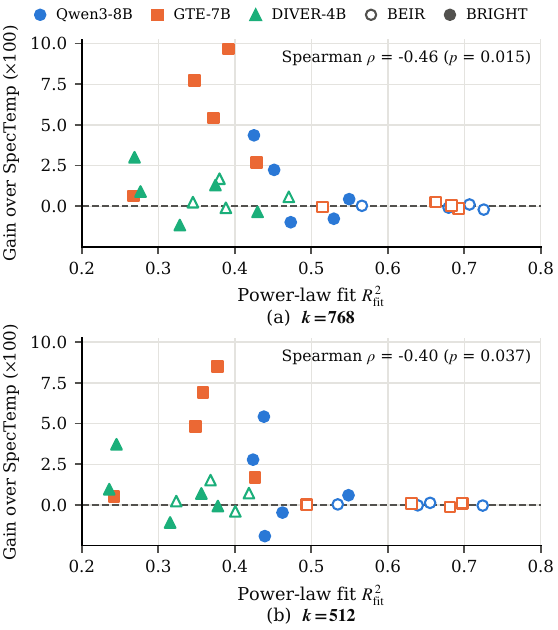}
    \caption{Power-law fit quality versus the gain of SpecTemp\,+\,RRC over SpecTemp at $k=768$ and $512$. Fit scores use 80 queries per training fold on the BEIR group and 60--114 on BRIGHT; 
    gains use the main experiment's training budgets. Each point averages three seeds. Colours and shapes denote encoders; filled markers denote BRIGHT and hollow markers the BEIR group. Gains use nDCG@10 ($\times100$), except MRR@10 on MS MARCO.
    }
  \label{fig:r2-gain}
  \vspace{-0.3cm}
\end{figure}

\subsubsection{When Does Calibration Help?}
\label{sec:exp-when}

To understand variation in calibration gains across configurations, we measure how closely the estimated margin-based weights follow a power law. For each training fold, we compute the coefficient of determination for the fit in Eq.~\eqref{eq:powerlaw-fit}:
\begin{equation}
  R^2_{\mathrm{fit}}
  =1-\frac{\sum_{i\in\mathcal I_+}u_i^2}
              {\sum_{i\in\mathcal I_+}
                (\log\widehat e_i-\bar\ell)^2},
  \label{eq:r2-fit}
\end{equation}
where $\bar\ell=|\mathcal I_+|^{-1} \sum_{i\in\mathcal I_+}\log\widehat e_i$, and $u_i$ is the residual defined in Eq.~\eqref{eq:residual}. A high value means that the fitted spectral trend explains most of the variation in the estimated log weights. This diagnostic measures agreement with the power-law family, whose exponent is fitted freely.
Because sample size affects estimation noise and $R^2_{\mathrm{fit}}$, we use 80 training queries per fold on the BEIR group to match the 60--114 queries on BRIGHT.
Fit scores are averaged over folds and seeds. We examine their association with calibration gains from the main experiments, which retain their original training budgets.

Figure~\ref{fig:r2-gain} shows a negative association between fit quality and calibration gain at both dimensions. Configurations with high fit scores cluster near zero gain, whereas those with lower scores include the largest improvements. The pattern also varies across encoders: GTE and Qwen3 occupy the high-fit cluster on the conventional benchmarks, while DIVER has lower fit scores on these same datasets and benefits from calibration on NQ.
Within BRIGHT, calibration gains also differ among configurations with similar fit scores.
Overall, larger calibration gains tend to occur where a single power-law trend captures less of the variation in the estimated margin-based weights.

\subsection{Ablation Study}
\label{sec:exp-ablation}

RRC combines a spectral prior with residual corrections estimated from relevance judgments. 
We evaluate calibration across fixed prior exponents and disentangle the effects of support selection, residual weighting, and spectral trend retention.
We use BRIGHT, where the main experiments show larger calibration gains, using all 15 encoder--dataset configurations at $k\in\{768,512\}$ and averaging over three seeds.

\begin{figure}[t]
  \centering
  \includegraphics[width=\columnwidth]{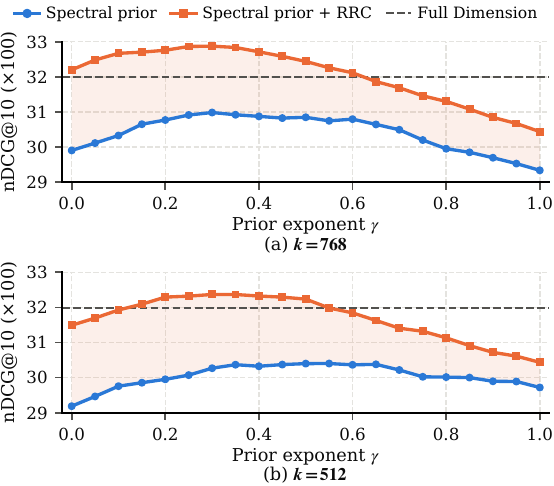}
  \caption{Mean nDCG@10 ($\times100$) over 15 encoder--dataset configurations on BRIGHT, averaged over three seeds, for the power-law prior $\lambda_i^{-\gamma}$ and the same prior with RRC, at $k=768$ and $512$. Shaded regions show the effectiveness gains from RRC; the dashed line is the mean full-dimensional score of the same configurations.\looseness=-1}
  \label{fig:gamma-sweep}
\end{figure}

\subsubsection{Calibration across the Spectral Prior Family.}

To evaluate whether RRC is effective across different spectral transformations, we apply it to the family of power-law spectral priors $p_i\propto\lambda_i^{-\gamma}$ over $\gamma\in \{0,0.05,\ldots,1\}$ at $k=768$ and $512$. This covers PCA at $\gamma=0$, whitening at $\gamma=1$, and the spectral transformations between them.

Figure~\ref{fig:gamma-sweep} shows higher mean nDCG@10 with RRC at every tested exponent and both dimensions, with generally larger gains at smaller $\gamma$. At intermediate exponents, the calibrated priors also exceed full-dimensional effectiveness on average, while the uncalibrated curve remains below it throughout the sweep. Calibration therefore provides additional gains when the prior exponent is held fixed, across the tested range of spectral weighting rules.

\begin{table}[t]
  \centering
\caption{Sources of the calibration gain on BRIGHT, averaged over 15 encoder--dataset configurations and 3 seeds (nDCG@10 $\times100$). SpecTemp on $\mathcal{I}_{+}$ applies only support selection. $\Delta$ and Wins denote the difference between the reported mean scores and the number of outperforming configurations relative to the listed reference; $^{*}$ marks $p<0.05$ (Wilcoxon signed-rank test over configurations). All label-based variants share cross-fitting folds, $\widehat{\mathbf e}$, and $\mathcal{I}_{+}$.}
  \label{tab:weight-source}
  \footnotesize
  \setlength{\tabcolsep}{2pt}
\begin{tabular}{llcccccc}
\toprule
\multirow{2}{*}{Reference}
  & \multirow{2}{*}{Weights}
  & \multicolumn{3}{c}{$k=768$}
  & \multicolumn{3}{c}{$k=512$}\\
\cmidrule(lr){3-5}\cmidrule(lr){6-8}
  & & nDCG & $\Delta$ & Wins
    & nDCG & $\Delta$ & Wins\\
\midrule
  --
    & SpecTemp
    & 30.19 & -- & --
    & 29.75 & -- & --\\
  \midrule
\multirow{3}{*}{SpecTemp}
  & SpecTemp on $\mathcal{I}_{+}$
  & 31.73 & $+1.54^{*}$ & 12/15 & 31.30 & $+1.55^{*}$ & 10/15\\
    & Direct $\widehat{\mathbf e}$
    & 31.50 & $+1.31$ & 7/15
    & 31.24 & $+1.49$ & 8/15\\
  & SpecTemp\,+\,RRC
  & \textbf{32.53} & $+2.34^{*}$ & 11/15 & \textbf{31.96} & $+2.21^{*}$ & 11/15\\
  \midrule
SpecTemp on $\mathcal{I}_{+}$
  & SpecTemp\,+\,RRC
  & -- & $+0.80$ & 12/15 & -- & $+0.66$ & 11/15\\
  \midrule
Direct $\widehat{\mathbf e}$
    & SpecTemp\,+\,RRC
    & -- & $+1.03$ & 11/15
    & -- & $+0.72$ & 7/15\\
  \bottomrule
\end{tabular}
  \vspace{-0.3cm}
\end{table}

\subsubsection{Sources of the Calibration Gain}
RRC applies two steps to the spectral prior: \emph{support selection}, which sets the weights outside $\mathcal{I}_{+}$ to zero, and \emph{residual weighting}, which multiplies the remaining prior weights by $e^{u_i}$ (Eq.~\eqref{eq:calibration}).
Using SpecTemp as the spectral prior, we evaluate \textbf{SpecTemp on $\mathcal I_{+}$}, which applies only support selection.  We also include \textbf{Direct $\widehat{\mathbf e}$}, which uses the estimated margin-based weights $\widehat{\mathbf e}$ from Eq.~\eqref{eq:empirical-weights-hat} as the final weights. Direct $\widehat{\mathbf e}$ shares RRC's support and residuals but retains the label-fitted exponent $\gamma_{\mathrm e}$.

Table~\ref{tab:weight-source} shows that both steps increase mean effectiveness at both target dimensions. 
Both steps use the same margin-based estimates: positive estimates define $\mathcal{I}_{+}$, and their log-space residuals reweight the prior on this support. 
Support selection alone significantly improves over SpecTemp, and residual weighting further increases the group mean on the same support in most configurations. The full RRC, which combines both steps, achieves the highest mean and significantly improves over SpecTemp.

The comparison with Direct $\widehat{\mathbf e}$ isolates the source of the spectral trend while holding the support and residuals fixed. For a power-law prior $p_i\propto\lambda_i^{-\gamma}$, the two weightings differ only in their exponents: substituting Eq.~\eqref{eq:residual} into Eq.~\eqref{eq:calibration} gives
\begin{equation}
  \log w_i=\log\widehat{e}_i+(\gamma_{\mathrm{e}}-\gamma)\log\lambda_i
  +\text{const},\qquad i\in\mathcal{I}_{+},
  \label{eq:direct-identity}
\end{equation}
where $\text{const}$ is an additive constant independent of $i$ and does not affect rankings. 
Thus, using $\widehat{\mathbf{e}}$ directly is equivalent to RRC with the prior exponent set to $\gamma_{\mathrm e}$. This exponent minimises the log-weight fitting error in Eq.~\eqref{eq:powerlaw-fit}, but does not necessarily maximise retrieval effectiveness. RRC instead uses the exponent of the chosen spectral prior while retaining the same residual corrections. 
With the support and residual held fixed, using the SpecTemp exponent yields higher mean retrieval scores than retaining $\gamma_{\mathrm e}$ in this experiment.

\subsection{Quantization and Storage Efficiency}
\label{sec:exp-quantization}
The preceding experiments compare compression methods at a fixed output dimension. We now examine their storage implications through two questions: (i) do the gains from RRC persist after quantization? (ii) does calibrated spectral compression offer a better storage--effectiveness trade-off than directly quantizing the original embeddings?
We use SpecTemp as the spectral prior and evaluate all three encoders on the five BRIGHT datasets, retaining $k=768$ dimensions for the spectral methods. RRC follows the same query-level cross-fitting protocol as the main experiments, and quantizers are trained on corpus embeddings only. We report nDCG@10 ($\times100$), averaged over three seeds and then over the 15 encoder--dataset configurations.

\subsubsection{Calibration under Quantization}
\label{sec:exp-quantization-calibration}

We first compare SpecTemp with and without RRC under identical encoding configurations, including the original full-dimensional embeddings as a reference.
For each representation, we evaluate FP32 and FP16 storage, as well as 8- and 4-bit scalar quantization (SQ8 and SQ4) implemented in Faiss~\cite{douze2024faiss}. Scalar quantization encodes each coordinate independently using 256 levels for SQ8 and 16 levels for SQ4.
We also evaluate product quantization (PQ)~\cite{jegou2010product} and an OPQ rotation~\cite{Ge2013OPQ} followed by PQ, denoted OPQ below. 
Binary encoding is evaluated only on the full-dimensional embeddings, as sign binarisation discards positive coordinate scaling.
PQ and OPQ use $M$ subquantizers with 8 bits each, resulting in $M$ bytes per document.

Table~\ref{tab:quantization} shows that RRC improves mean effectiveness over SpecTemp under every encoding configuration evaluated for the spectral methods. FP16 and SQ8 largely preserve the gain observed with FP32, whereas SQ4 reduces it. The gains remain positive at every tested PQ and OPQ code length, showing that calibration remains beneficial across different quantization methods and compression levels.

\begin{table}[t]
  \centering
    \caption{Mean nDCG@10 ($\times100$) over 15 BRIGHT configurations and three seeds. Spectral methods retain $k=768$ dimensions. The upper block compares floating-point and scalar encodings (plus full-dimensional binary); the lower block compares PQ and OPQ at $M$ bytes per document.}
  \label{tab:quantization}
  \setlength{\tabcolsep}{2.5pt}
  \renewcommand{\arraystretch}{0.802}
  \begin{tabular}{@{}llccccc@{}}
    \toprule
    \multicolumn{2}{@{}l}{\textit{Without product quantization}}
      & FP32 & FP16 & SQ8 & SQ4 & Binary\\
    \midrule
    \multicolumn{2}{@{}l}{Full Dimension}   & 31.99 & 31.99 & 32.02 & \textbf{31.80} & 30.21\\
    \multicolumn{2}{@{}l}{SpecTemp}         & 30.19 & 30.18 & 30.17 & 30.00 & --\\
    \multicolumn{2}{@{}l}{SpecTemp\,+\,RRC} & \textbf{32.53} & \textbf{32.53} & \textbf{32.49} & 31.70 & --\\
    \midrule
    \midrule
        \multicolumn{2}{@{}l}{\textit{With product quantization}}
      & \multicolumn{5}{c}{Bytes per document ($M$)}\\
    \cmidrule(l){3-7}
    Representation & Codec & 16 & 32 & 64 & 128 & 256\\
    \midrule
    Full Dimension & PQ
      & 4.88 & 8.14 & 13.11 & 20.64 & 27.63\\
    Full Dimension & OPQ
      & 17.53 & 22.14 & 26.38 & 28.82 & 30.49\\
    SpecTemp & PQ
      & 13.34 & 19.06 & 24.37 & 27.45 & 29.64\\
    SpecTemp & OPQ
      & 17.70 & 22.37 & 26.32 & 28.80 & 29.98\\
    SpecTemp\,+\,RRC & PQ
      & 15.55 & 21.72 & 27.22 & 30.35 & 32.08\\
    SpecTemp\,+\,RRC & OPQ
      & \textbf{19.40} & \textbf{25.38} & \textbf{29.51}
      & \textbf{31.55} & \textbf{32.13}\\
      \bottomrule
  \end{tabular}
  \vspace{-0.3cm}
\end{table}

\subsubsection{Effectiveness at Matched Storage Budgets}
\label{sec:exp-storage-budget}

We compare the three representations across encodings using the empirical Pareto fronts in Figure~\ref{fig:storage-pareto}. Each operating point uses the same encoding configuration across all 15 encoder--dataset combinations. We average effectiveness and storage before selecting the fronts, measuring storage as mean bytes per document vector.

Up to 128 bytes per document, the SpecTemp frontier closely follows that of the original embeddings. At larger budgets, it levels off below the full-dimensional FP32 mean score. Increasing encoding precision alone therefore does not close the effectiveness gap. Adding RRC raises the spectral frontier, providing higher mean effectiveness at matched storage budgets.
Under the same SQ8 encoding, SpecTemp\,+\,RRC achieves a mean nDCG@10 of 32.49 at 768 bytes per document, compared with 32.02 at 3,413 bytes per document on average for the original embeddings. Thus, calibrated spectral compression achieves higher mean effectiveness while requiring substantially less storage per document.

 \begin{figure}[t]
  \centering
  \includegraphics[width=\columnwidth]{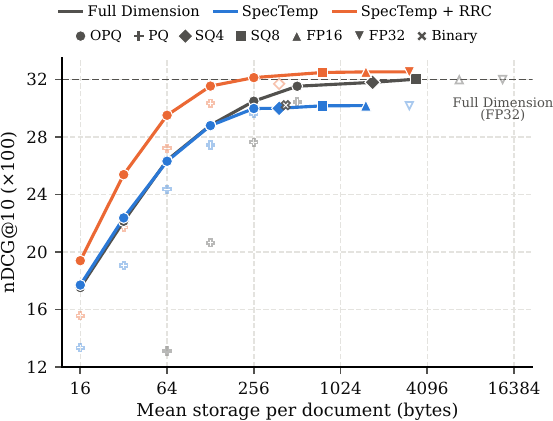}
  \caption{Storage--effectiveness trade-offs on BRIGHT, averaged over three seeds and 15 encoder--dataset configurations.
  Colours identify representations and marker shapes identify encodings; spectral methods retain $k=768$ dimensions. Lines connect the Pareto-optimal operating points of each representation. Dominated points are faded, except for the binary baseline, which is highlighted for visibility. Full-dimensional PQ at $M=16$ and $32$ falls below the plotted range. The dashed line marks mean full-dimensional FP32 effectiveness. 
  Storage excludes shared transformation matrices and codebooks, whose costs are amortised over the collection.}
  \label{fig:storage-pareto}
  \vspace{-0.2cm}
\end{figure}

\section{Conclusion}

We presented RRC, a closed-form calibration method for spectral compression in dense retrieval that requires no iterative model training or additional inference stage. RRC retains directions with positive estimated margin-based weights, separates their log weights into a power-law trend and a residual, and uses the residual to reweight the prior while preserving the chosen compressor's spectral trend. Across three encoders and nine datasets, RRC improves mean effectiveness for all tested spectral priors, especially on reasoning-intensive retrieval, and these gains persist under quantization.\looseness=-1

RRC requires relevance judgments from the target task, and its gains tend to be smaller when the label-estimated weights closely follow a power law. 
Future work will investigate alternatives to corpus-mean centring and use relevance evidence to learn the projection subspace itself.





\bibliographystyle{ACM-Reference-Format}
\bibliography{main}

@inproceedings{KarpukhinOMLWEC20_DPR,
  author       = {Vladimir Karpukhin and
                  Barlas Oguz and
                  Sewon Min and
                  Patrick S. H. Lewis and
                  Ledell Wu and
                  Sergey Edunov and
                  Danqi Chen and
                  Wen{-}tau Yih},
  title        = {Dense Passage Retrieval for Open-Domain Question Answering},
  booktitle    = {Proceedings of the 2020 Conference on Empirical Methods in Natural
                  Language Processing, {EMNLP} 2020},
  pages        = {6769--6781},
  publisher    = {Association for Computational Linguistics},
  year         = {2020},
  url          = {https://doi.org/10.18653/v1/2020.emnlp-main.550},
  doi          = {10.18653/V1/2020.EMNLP-MAIN.550},
  bibsource    = {dblp computer science bibliography, https://dblp.org}
}

@article{bajaj2016ms,
title={Ms marco: A human generated machine reading comprehension dataset},
author={Bajaj, Payal and Campos, Daniel and Craswell, Nick and Deng, Li and Gao, Jianfeng and Liu, Xiaodong and Majumder, Rangan and McNamara, Andrew and Mitra, Bhaskar and Nguyen, Tri and others},
journal={arXiv preprint arXiv:1611.09268},
year={2016}
}

@inproceedings{
thakur2021beir,
title={{BEIR}: A Heterogeneous Benchmark for Zero-shot Evaluation of Information Retrieval Models},
author={Nandan Thakur and Nils Reimers and Andreas R{\"u}ckl{\'e} and Abhishek Srivastava and Iryna Gurevych},
booktitle={Thirty-fifth Conference on Neural Information Processing Systems Datasets and Benchmarks Track (Round 2)},
year={2021},
url={https://openreview.net/forum?id=wCu6T5xFjeJ}
}

@article{KwiatkowskiPRCP19_NQ,
  author       = {Tom Kwiatkowski and
                  Jennimaria Palomaki and
                  Olivia Redfield and
                  Michael Collins and
                  Ankur P. Parikh and
                  Chris Alberti and
                  Danielle Epstein and
                  Illia Polosukhin and
                  Jacob Devlin and
                  Kenton Lee and
                  Kristina Toutanova and
                  Llion Jones and
                  Matthew Kelcey and
                  Ming{-}Wei Chang and
                  Andrew M. Dai and
                  Jakob Uszkoreit and
                  Quoc Le and
                  Slav Petrov},
  title        = {Natural Questions: a Benchmark for Question Answering Research},
  journal      = {Trans. Assoc. Comput. Linguistics},
  volume       = {7},
  pages        = {452--466},
  year         = {2019},
  url          = {https://doi.org/10.1162/tacl\_a\_00276},
  doi          = {10.1162/TACL\_A\_00276},
  bibsource    = {dblp computer science bibliography, https://dblp.org}
}

@inproceedings{LinALOLMY023_dragon,
  author       = {Sheng{-}Chieh Lin and
                  Akari Asai and
                  Minghan Li and
                  Barlas Oguz and
                  Jimmy Lin and
                  Yashar Mehdad and
                  Wen{-}tau Yih and
                  Xilun Chen},
  title        = {How to Train Your Dragon: Diverse Augmentation Towards Generalizable Dense Retrieval},
  booktitle    = {Findings of the Association for Computational Linguistics: {EMNLP}   2023, Singapore, December 6-10, 2023},
  pages        = {6385--6400},
  publisher    = {Association for Computational Linguistics},
  year         = {2023},
  url          = {https://doi.org/10.18653/v1/2023.findings-emnlp.423},
  doi          = {10.18653/V1/2023.FINDINGS-EMNLP.423},
  bibsource    = {dblp computer science bibliography, https://dblp.org}
}

@inproceedings{MaiaHFDMZB18_fiqa,
  author       = {Macedo Maia and
                  Siegfried Handschuh and
                  Andr{\'{e}} Freitas and
                  Brian Davis and
                  Ross McDermott and
                  Manel Zarrouk and
                  Alexandra Balahur},
  title        = {WWW'18 Open Challenge: Financial Opinion Mining and Question Answering},
  booktitle    = {Companion of the The Web Conference 2018 on The Web Conference 2018,
                  {WWW} 2018, Lyon , France, April 23-27, 2018},
  pages        = {1941--1942},
  publisher    = {{ACM}},
  year         = {2018},
  url          = {https://doi.org/10.1145/3184558.3192301},
  bibsource    = {dblp computer science bibliography, https://dblp.org}
}

@misc{zhang2025qwen3embedding,
      title={Qwen3 Embedding: Advancing Text Embedding and Reranking Through Foundation Models}, 
      author={Yanzhao Zhang and Mingxin Li and Dingkun Long and Xin Zhang and Huan Lin and Baosong Yang and Pengjun Xie and An Yang and Dayiheng Liu and Junyang Lin and Fei Huang and Jingren Zhou},
      year={2025},
      eprint={2506.05176},
      archivePrefix={arXiv},
      primaryClass={cs.CL},
      url={https://arxiv.org/abs/2506.05176}, 
}

@misc{long2025diver,
      title={DIVER: A Multi-Stage Approach for Reasoning-intensive Information Retrieval}, 
      author={Meixiu Long and Duolin Sun and Dan Yang and Junjie Wang and Yue Shen and Jian Wang and Peng Wei and Jinjie Gu and Jiahai Wang},
      year={2025},
      eprint={2508.07995},
      archivePrefix={arXiv},
      primaryClass={cs.IR},
      url={https://arxiv.org/abs/2508.07995}, 
}

@inproceedings{KusupatiBRWSRHC22_MRL,
  author       = {Aditya Kusupati and
                  Gantavya Bhatt and
                  Aniket Rege and
                  Matthew Wallingford and
                  Aditya Sinha and
                  Vivek Ramanujan and
                  William Howard{-}Snyder and
                  Kaifeng Chen and
                  Sham M. Kakade and
                  Prateek Jain and
                  Ali Farhadi},
  title        = {Matryoshka Representation Learning},
  booktitle    = {Advances in Neural Information Processing Systems 35: Annual Conference
                  on Neural Information Processing Systems 2022, NeurIPS 2022, New Orleans,
                  LA, USA, November 28 - December 9, 2022},
  year         = {2022},
  url          = {http://papers.nips.cc/paper\_files/paper/2022/hash/c32319f4868da7613d78af9993100e42-Abstract-Conference.html},
  bibsource    = {dblp computer science bibliography, https://dblp.org}
}

@misc{li2023gte_7b,
      title={Towards General Text Embeddings with Multi-stage Contrastive Learning}, 
      author={Zehan Li and Xin Zhang and Yanzhao Zhang and Dingkun Long and Pengjun Xie and Meishan Zhang},
      year={2023},
      eprint={2308.03281},
      archivePrefix={arXiv},
      primaryClass={cs.CL},
      url={https://arxiv.org/abs/2308.03281}, 
}

@inproceedings{ SatopaaAIR11_Kneedle,
  author       = {Ville Satopaa and
                  Jeannie R. Albrecht and
                  David E. Irwin and
                  Barath Raghavan},
  title        = {Finding a "Kneedle" in a Haystack: Detecting Knee Points in System
                  Behavior},
  booktitle    = {31st {IEEE} International Conference on Distributed Computing Systems
                  Workshops {(ICDCS} 2011 Workshops), 20-24 June 2011, Minneapolis,
                  Minnesota, {USA}},
  pages        = {166--171},
  publisher    = {{IEEE} Computer Society},
  year         = {2011},
  url          = {https://doi.org/10.1109/ICDCSW.2011.20},
  doi          = {10.1109/ICDCSW.2011.20},
  bibsource    = {dblp computer science bibliography, https://dblp.org}
}

@inproceedings{ma-etal-2021-simple,
    title = "Simple and Effective Unsupervised Redundancy Elimination to Compress Dense Vectors for Passage Retrieval",
    author = "Ma, Xueguang  and
      Li, Minghan  and
      Sun, Kai  and
      Xin, Ji  and
      Lin, Jimmy",
    booktitle = "Proceedings of the 2021 Conference on Empirical Methods in Natural Language Processing",
    month = nov,
    year = "2021",
    address = "Online and Punta Cana, Dominican Republic",
    publisher = "Association for Computational Linguistics",
    url = "https://aclanthology.org/2021.emnlp-main.227/",
    pages = "2854--2859",
}

@misc{su2021whiteningsentence,
      title={Whitening Sentence Representations for Better Semantics and Faster Retrieval}, 
      author={Jianlin Su and Jiarun Cao and Weijie Liu and Yangyiwen Ou},
      year={2021},
      eprint={2103.15316},
      archivePrefix={arXiv},
      primaryClass={cs.CL},
      url={https://arxiv.org/abs/2103.15316}, 
}

@inproceedings{huang-etal-2021-whiteningbert-easy,
    title = "{W}hitening{BERT}: An Easy Unsupervised Sentence Embedding Approach",
    author = "Huang, Junjie  and
      Tang, Duyu  and
      Zhong, Wanjun  and
      Lu, Shuai  and
      Shou, Linjun  and
      Gong, Ming  and
      Jiang, Daxin  and
      Duan, Nan",
    booktitle = "Findings of the Association for Computational Linguistics: EMNLP 2021",
    month = nov,
    year = "2021",
    address = "Punta Cana, Dominican Republic",
    publisher = "Association for Computational Linguistics",
    url = "https://aclanthology.org/2021.findings-emnlp.23/",
    pages = "238--244"
}

@article{johnson1984extensions,
  title={Extensions of Lipschitz mappings into a Hilbert space},
  author={Johnson, William B and Lindenstrauss, Joram and others},
  journal={Contemporary mathematics},
  volume={26},
  number={189-206},
  pages={1},
  year={1984},
    url={https://api.semanticscholar.org/CorpusID:117819162}
}

@inproceedings{raunak-etal-2019-effective,
    title = "Effective Dimensionality Reduction for Word Embeddings",
    author = "Raunak, Vikas  and
      Gupta, Vivek  and
      Metze, Florian",
    booktitle = "Proceedings of the 4th Workshop on Representation Learning for NLP (RepL4NLP-2019)",
    month = aug,
    year = "2019",
    address = "Florence, Italy",
    publisher = "Association for Computational Linguistics",
    url = "https://aclanthology.org/W19-4328/",
    pages = "235--243",
}

@article{jegou2010product,
  author       = {Herv{\'{e}} J{\'{e}}gou and
                  Matthijs Douze and
                  Cordelia Schmid},
  title        = {Product Quantization for Nearest Neighbor Search},
  journal      = {{IEEE} Trans. Pattern Anal. Mach. Intell.},
  volume       = {33},
  number       = {1},
  pages        = {117--128},
  year         = {2011},
  url          = {https://doi.org/10.1109/TPAMI.2010.57},
  doi          = {10.1109/TPAMI.2010.57},
  bibsource    = {dblp computer science bibliography, https://dblp.org}
}

@online{kexuefm-9079,
  title   = {When BERT Whitening Introduces Hyperparameters: There Is Always One That Suits You},
  author  = {Su, Jianlin},
  year    = {2022},
  month   = {May},
  url     = {https://kexue.fm/archives/9079},
  note    = {Chinese blog post}
}

@inproceedings{zuo2026efficiencyembeddingcompress,
    title = "More Than Efficiency: Embedding Compression Improves Domain Adaptation in Dense Retrieval",
    author = "Zuo, Chunsheng  and
      Khashabi, Daniel",
    booktitle = "Proceedings of the First Workshop on Structured Understanding, Retrieval, and Generation in the {LLM} Era ({SURG}e{LLM} 2026)",
    month = jul,
    year = "2026",
    address = "San Diego, California, United States",
    publisher = "Association for Computational Linguistics",
    url = "https://aclanthology.org/2026.surgellm-1.24/",
    doi = "10.18653/v1/2026.surgellm-1.24",
    pages = "361--377",
    ISBN = "979-8-89176-406-4"
}

@inproceedings{Yang0ZBCSM18_HotpotQA,
  author       = {Zhilin Yang and
                  Peng Qi and
                  Saizheng Zhang and
                  Yoshua Bengio and
                  William W. Cohen and
                  Ruslan Salakhutdinov and
                  Christopher D. Manning},
  title        = {HotpotQA: {A} Dataset for Diverse, Explainable Multi-hop Question
                  Answering},
  booktitle    = {Proceedings of the 2018 Conference on Empirical Methods in Natural
                  Language Processing, Brussels, Belgium, October 31 - November 4, 2018},
  pages        = {2369--2380},
  publisher    = {Association for Computational Linguistics},
  year         = {2018},
  url          = {https://doi.org/10.18653/v1/d18-1259},
  doi          = {10.18653/V1/D18-1259},
  bibsource    = {dblp computer science bibliography, https://dblp.org}
}

@inproceedings{liu-etal-2022-dimension,
    title = "Dimension Reduction for Efficient Dense Retrieval via Conditional Autoencoder",
    author = "Liu, Zhenghao  and
      Zhang, Han  and
      Xiong, Chenyan  and
      Liu, Zhiyuan  and
      Gu, Yu  and
      Li, Xiaohua",
    booktitle = "Proceedings of the 2022 Conference on Empirical Methods in Natural Language Processing",
    month = dec,
    year = "2022",
    address = "Abu Dhabi, United Arab Emirates",
    publisher = "Association for Computational Linguistics",
    url = "https://aclanthology.org/2022.emnlp-main.384/",
    doi = "10.18653/v1/2022.emnlp-main.384",
    pages = "5692--5698"
}

@article{douze2024faiss,
      title={The Faiss library},
      author={Matthijs Douze and Alexandr Guzhva and Chengqi Deng and Jeff Johnson and Gergely Szilvasy and Pierre-Emmanuel Mazaré and Maria Lomeli and Lucas Hosseini and Hervé Jégou},
      year={2024},
      eprint={2401.08281},
      archivePrefix={arXiv},
      primaryClass={cs.LG}
}

@inproceedings{zhang-etal-2024-evaluating-unsupervised,
    title = "Evaluating Unsupervised Dimensionality Reduction Methods for Pretrained Sentence Embeddings",
    author = "Zhang, Gaifan  and
      Zhou, Yi  and
      Bollegala, Danushka",
    booktitle = "Proceedings of the 2024 Joint International Conference on Computational Linguistics, Language Resources and Evaluation (LREC-COLING 2024)",
    month = may,
    year = "2024",
    address = "Torino, Italia",
    publisher = "ELRA and ICCL",
    url = "https://aclanthology.org/2024.lrec-main.579/",
    pages = "6530--6543"
}

@inproceedings{LiEK26_SpecTemp,
  author       = {Yongkang Li and
                  Panagiotis Eustratiadis and
                  Evangelos Kanoulas},
  title        = {Spectral Tempering for Embedding Compression in Dense Passage Retrieval},
  booktitle    = {Proceedings of the 49th International {ACM} {SIGIR} Conference on
                  Research and Development in Information Retrieval, {SIGIR} 2026, MelbourneVICAustralia,
                  July 20-24, 2026},
  pages        = {3941--3946},
  publisher    = {{ACM}},
  year         = {2026},
  url          = {https://doi.org/10.1145/3805712.3809896},
  doi          = {10.1145/3805712.3809896},
  bibsource    = {dblp computer science bibliography, https://dblp.org}
}

@inproceedings{su2024bright,
  author       = {Hongjin Su and
                  Howard Yen and
                  Mengzhou Xia and
                  Weijia Shi and
                  Niklas Muennighoff and
                  Han{-}yu Wang and
                  Haisu Liu and
                  Quan Shi and
                  Zachary S. Siegel and
                  Michael Tang and
                  Ruoxi Sun and
                  Jinsung Yoon and
                  Sercan {\"{O}}. Arik and
                  Danqi Chen and
                  Tao Yu},
  title        = {{BRIGHT:} {A} Realistic and Challenging Benchmark for Reasoning-Intensive
                  Retrieval},
  booktitle    = {The Thirteenth International Conference on Learning Representations,
                  {ICLR} 2025, Singapore, April 24-28, 2025},
  publisher    = {OpenReview.net},
  year         = {2025},
  url          = {https://openreview.net/forum?id=ykuc5q381b},
  bibsource    = {dblp computer science bibliography, https://dblp.org}
}

@inproceedings{yoon-etal-2024-search,
    title = "Search-Adaptor: Embedding Customization for Information Retrieval",
    author = "Yoon, Jinsung  and
      Chen, Yanfei  and
      Arik, Sercan  and
      Pfister, Tomas",
    booktitle = "Proceedings of the 62nd Annual Meeting of the Association for Computational Linguistics (Volume 1: Long Papers)",
    month = aug,
    year = "2024",
    address = "Bangkok, Thailand",
    publisher = "Association for Computational Linguistics",
    url = "https://aclanthology.org/2024.acl-long.661/",
    doi = "10.18653/v1/2024.acl-long.661",
    pages = "12230--12247"
}

@inproceedings{yoon-etal-2024-matryoshka,
    title = "Matryoshka-Adaptor: Unsupervised and Supervised Tuning for Smaller Embedding Dimensions",
    author = "Yoon, Jinsung  and
      Sinha, Rajarishi  and
      Arik, Sercan O  and
      Pfister, Tomas",
    booktitle = "Proceedings of the 2024 Conference on Empirical Methods in Natural Language Processing",
    month = nov,
    year = "2024",
    address = "Miami, Florida, USA",
    publisher = "Association for Computational Linguistics",
    url = "https://aclanthology.org/2024.emnlp-main.576/",
    doi = "10.18653/v1/2024.emnlp-main.576",
    pages = "10318--10336"
}

@inproceedings{zhang-etal-2025-smec,
    title = "{SMEC}:Rethinking Matryoshka Representation Learning for Retrieval Embedding Compression",
    author = "Zhang, Biao  and
      Chen, Lixin  and
      Liu, Tong  and
      Zheng, Bo",
    booktitle = "Proceedings of the 2025 Conference on Empirical Methods in Natural Language Processing",
    month = nov,
    year = "2025",
    address = "Suzhou, China",
    publisher = "Association for Computational Linguistics",
    url = "https://aclanthology.org/2025.emnlp-main.1332/",
    doi = "10.18653/v1/2025.emnlp-main.1332",
    pages = "26209--26222",
    ISBN = "979-8-89176-332-6"
}

@article{Gao2024RaBitQ,
author = {Gao, Jianyang and Long, Cheng},
title = {RaBitQ: Quantizing High-Dimensional Vectors with a Theoretical Error Bound for Approximate Nearest Neighbor Search},
year = {2024},
issue_date = {June 2024},
publisher = {Association for Computing Machinery},
address = {New York, NY, USA},
volume = {2},
number = {3},
url = {https://doi.org/10.1145/3654970},
doi = {10.1145/3654970},
journal = {Proc. ACM Manag. Data},
month = may,
articleno = {167},
numpages = {27}
}

@article{Ge2013OPQ,
  title={Optimized product quantization},
  author={Ge, Tiezheng and He, Kaiming and Ke, Qifa and Sun, Jian},
  journal={IEEE transactions on pattern analysis and machine intelligence},
  volume={36},
  number={4},
  pages={744--755},
  year={2013},
  publisher={IEEE}
}

@misc{harrier2026,
  title  = {harrier-oss-v1},
  author = {{Microsoft}},
  year   = {2026},
  note   = {Model card},
  url    = {https://huggingface.co/microsoft/harrier-oss-v1-27b},
}

@inproceedings{takeshita-etal-2025-randomly,
    title = "Randomly Removing 50{\%} of Dimensions in Text Embeddings has Minimal Impact on Retrieval and Classification Tasks",
    author = "Takeshita, Sotaro  and
      Takeshita, Yurina  and
      Ruffinelli, Daniel  and
      Ponzetto, Simone Paolo",
    booktitle = "Proceedings of the 2025 Conference on Empirical Methods in Natural Language Processing",
    month = nov,
    year = "2025",
    address = "Suzhou, China",
    publisher = "Association for Computational Linguistics",
    url = "https://aclanthology.org/2025.emnlp-main.1410/",
    doi = "10.18653/v1/2025.emnlp-main.1410",
    pages = "27705--27726",
    ISBN = "979-8-89176-332-6"
}

@inproceedings{Diera2025SoftZCA,
  author       = {Andor Diera and
                  Lukas Galke and
                  Ansgar Scherp},
  title        = {Isotropy Matters: Soft-ZCA Whitening of Embeddings for Semantic Code
                  Search},
  booktitle    = {33rd European Symposium on Artificial Neural Networks, Computational
                  Intelligence and Machine Learning, {ESANN} 2025, Bruges, Belgium,
                  April 23-25, 2025},
  year         = {2025},
  url          = {https://doi.org/10.14428/esann/2025.ES2025-58},
  doi          = {10.14428/ESANN/2025.ES2025-58},
  bibsource    = {dblp computer science bibliography, https://dblp.org}
}

@inproceedings{Faggioli2024DIME,
  author       = {Guglielmo Faggioli and
                  Nicola Ferro and
                  Raffaele Perego and
                  Nicola Tonellotto},
  title        = {Dimension Importance Estimation for Dense Information Retrieval},
  booktitle    = {Proceedings of the 47th International {ACM} {SIGIR} Conference on
                  Research and Development in Information Retrieval, {SIGIR} 2024, Washington
                  DC, USA, July 14-18, 2024},
  pages        = {1318--1328},
  publisher    = {{ACM}},
  year         = {2024},
  url          = {https://doi.org/10.1145/3626772.3657691},
  doi          = {10.1145/3626772.3657691},
  bibsource    = {dblp computer science bibliography, https://dblp.org}
}

@article{Faggioli2026GettingOffDIME,
  author       = {Guglielmo Faggioli and
                  Nicola Ferro and
                  Raffaele Perego and
                  Nicola Tonellotto},
  title        = {Getting off the {DIME:} Dimension Pruning via Dimension Importance
                  Estimation for Dense Information Retrieval},
  journal      = {{ACM} Trans. Inf. Syst.},
  volume       = {44},
  number       = {1},
  pages        = {9:1--9:34},
  year         = {2026},
  url          = {https://doi.org/10.1145/3765619},
  doi          = {10.1145/3765619},
  bibsource    = {dblp computer science bibliography, https://dblp.org}
}

@inproceedings{Campagnano2025UnveilingDIME,
  author       = {Cesare Campagnano and
                  Antonio Mallia and
                  Fabrizio Silvestri},
  title        = {Unveiling {DIME:} Reproducibility, Generalizability, and Formal Analysis
                  of Dimension Importance Estimation for Dense Retrieval},
  booktitle    = {Proceedings of the 48th International {ACM} {SIGIR} Conference on
                  Research and Development in Information Retrieval, {SIGIR} 2025, Padua,
                  Italy, July 13-18, 2025},
  pages        = {3367--3376},
  publisher    = {{ACM}},
  year         = {2025},
  url          = {https://doi.org/10.1145/3726302.3730318},
  doi          = {10.1145/3726302.3730318},
  bibsource    = {dblp computer science bibliography, https://dblp.org}
}

@inproceedings{DErasmo2025Eclipse,
  author       = {Giulio D'Erasmo and
                  Giovanni Trappolini and
                  Fabrizio Silvestri and
                  Nicola Tonellotto},
  title        = {Eclipse: Contrastive Dimension Importance Estimation with Pseudo-Irrelevance
                  Feedback for Dense Retrieval},
  booktitle    = {Proceedings of the 2025 International {ACM} {SIGIR} Conference on
                  Innovative Concepts and Theories in Information Retrieval, {ICTIR}
                  2025, Padua, Italy, 18 July 2025},
  pages        = {147--154},
  publisher    = {{ACM}},
  year         = {2025},
  url          = {https://doi.org/10.1145/3731120.3744579},
  doi          = {10.1145/3731120.3744579},
  bibsource    = {dblp computer science bibliography, https://dblp.org}
}

@inproceedings{d-erasmo-etal-2026-statistical,
    title = "Statistical Foundations of {DIME}: Risk Estimation for Practical Index Selection",
    author = "D'Erasmo, Giulio  and
      Campagnano, Cesare  and
      Mallia, Antonio  and
      Brutti, Pierpaolo  and
      Tonellotto, Nicola  and
      Silvestri, Fabrizio",
    booktitle = "Proceedings of the 19th Conference of the {E}uropean Chapter of the {A}ssociation for {C}omputational {L}inguistics (Volume 2: Short Papers)",
    month = mar,
    year = "2026",
    address = "Rabat, Morocco",
    publisher = "Association for Computational Linguistics",
    url = "https://aclanthology.org/2026.eacl-short.51/",
    doi = "10.18653/v1/2026.eacl-short.51",
    pages = "722--730",
    ISBN = "979-8-89176-381-4"
}

@inproceedings{wu-etal-2026-learning,
    title = "Learning to Select: Query-Aware Adaptive Dimension Selection for Dense Retrieval",
    author = "Wu, Zhanyu  and
      Zhang, Richong  and
      Nie, Zhijie",
    booktitle = "Proceedings of the 64th Annual Meeting of the {A}ssociation for {C}omputational {L}inguistics (Volume 1: Long Papers)",
    month = jul,
    year = "2026",
    address = "San Diego, California, United States",
    publisher = "Association for Computational Linguistics",
    url = "https://aclanthology.org/2026.acl-long.849/",
    doi = "10.18653/v1/2026.acl-long.849",
    pages = "18666--18677",
    ISBN = "979-8-89176-390-6"
}

@article{Fan2008FAIR,
author = {Jianqing Fan and Yingying Fan},
title = {{High-dimensional classification using features annealed independence rules}},
volume = {36},
journal = {The Annals of Statistics},
number = {6},
publisher = {Institute of Mathematical Statistics},
pages = {2605 -- 2637},
year = {2008},
doi = {10.1214/07-AOS504},
URL = {https://doi.org/10.1214/07-AOS504}
}

@article{Bickel2004Fisher,
author = {Peter J. Bickel and Elizaveta Levina},
title = {{Some theory for Fisher's linear discriminant function, `naive Bayes', and some alternatives when there are many more variables than observations}},
volume = {10},
journal = {Bernoulli},
number = {6},
publisher = {Bernoulli Society for Mathematical Statistics and Probability},
pages = {989 -- 1010},
year = {2004},
doi = {10.3150/bj/1106314847},
URL = {https://doi.org/10.3150/bj/1106314847}
}

@InProceedings{pmlr-v267-wen25e,
  title = 	 {Beyond Matryoshka: Revisiting Sparse Coding for Adaptive Representation},
  author =       {Wen, Tiansheng and Wang, Yifei and Zeng, Zequn and Peng, Zhong and Su, Yudi and Liu, Xinyang and Chen, Bo and Liu, Hongwei and Jegelka, Stefanie and You, Chenyu},
  booktitle = 	 {Proceedings of the 42nd International Conference on Machine Learning},
  pages = 	 {66520--66538},
  year = 	 {2025},
  volume = 	 {267},
  series = 	 {Proceedings of Machine Learning Research},
  month = 	 {13--19 Jul},
  publisher =    {PMLR},
  url = 	 {https://proceedings.mlr.press/v267/wen25e.html}
}

\appendix

\providecommand{\sigup}{\ensuremath{^{\uparrow}}}
\providecommand{\signs}{\ensuremath{^{\text{\tiny ns}}}}

\begin{table*}[t]
\centering
\caption{Retrieval performance on MS MARCO, NQ, HotpotQA, and FiQA. MS MARCO reports MRR@10 on the dev set; the other datasets report nDCG@10 on the test set. Scores are multiplied by 100 and averaged over seeds 1999, 5, and 2026 at target dimensions $k\in\{768,512,256,128\}$. Gray cells show full-dimensional references. \sigup{} marks a score significantly better than the original full dimension and \signs{} one not significantly different from it (paired t-test, $p<0.05$), in both cases consistently for all three seeds. Unmarked scores are significantly worse or inconsistent across seeds.}
\label{tab:main_results_beir}
\begingroup
\footnotesize
\setlength{\tabcolsep}{3.5pt}
\renewcommand{\arraystretch}{0.94}
\begin{tabular}{@{}llllllllllllllllllllllcccccccccccccccc@{}}
\toprule
\multirow{2}{*}{\textbf{Model}} & \multirow{2}{*}{\textbf{Method}}
& \multicolumn{4}{c}{\textbf{MS MARCO}}
& \multicolumn{4}{c}{\textbf{NQ}}
& \multicolumn{4}{c}{\textbf{HotpotQA}}
& \multicolumn{4}{c}{\textbf{FiQA}}
\\
\cmidrule(lr){3-6}
\cmidrule(lr){7-10}
\cmidrule(lr){11-14}
\cmidrule(lr){15-18}
& & \textbf{768} & \textbf{512} & \textbf{256} & \textbf{128} & \textbf{768} & \textbf{512} & \textbf{256} & \textbf{128} & \textbf{768} & \textbf{512} & \textbf{256} & \textbf{128} & \textbf{768} & \textbf{512} & \textbf{256} & \textbf{128} \\
\midrule

\multirow{14}{*}{Qwen3-8B}
& Full Dimension
& \multicolumn{4}{c}{\cellcolor{gray!25}{36.82}} & \multicolumn{4}{c}{\cellcolor{gray!25}{65.06}} & \multicolumn{4}{c}{\cellcolor{gray!25}{76.84}} & \multicolumn{4}{c}{\cellcolor{gray!25}{64.67}} \\
& PCA ($k{=}D$)
& \multicolumn{4}{c}{\cellcolor{gray!25}36.64\rlap{\signs}} & \multicolumn{4}{c}{\cellcolor{gray!25}65.33\rlap{\sigup}} & \multicolumn{4}{c}{\cellcolor{gray!25}76.54} & \multicolumn{4}{c}{\cellcolor{gray!25}64.31\rlap{\signs}} \\
& PCA ($k{=}D$) + RRC
& \multicolumn{4}{c}{\cellcolor{gray!25}36.66\rlap{\signs}} & \multicolumn{4}{c}{\cellcolor{gray!25}65.13\rlap{\signs}} & \multicolumn{4}{c}{\cellcolor{gray!25}76.61} & \multicolumn{4}{c}{\cellcolor{gray!25}64.26\rlap{\signs}} \\
& Prefix Truncation
& 36.43 & 35.67 & 34.52 & 32.40 & 63.72 & 63.18 & 61.07 & 57.29 & 75.27 & 74.10 & 70.67 & 62.83 & 63.95\signs & 63.46 & 61.59 & 57.36 \\
& Random Truncation
& 35.89 & 35.47 & 34.28 & 31.48 & 63.47 & 62.49 & 59.31 & 53.25 & 74.71 & 73.30 & 69.51 & 58.94 & 62.96 & 61.64 & 59.38 & 53.41 \\
& Random Projection
& 36.16 & 35.69 & 34.27 & 31.98 & 63.65 & 62.53 & 60.37 & 55.45 & 74.72 & 73.52 & 69.90 & 61.39 & 63.05 & 62.58 & 59.80 & 55.53 \\
& Matryoshka-Adaptor-U
& 36.52 & 35.86 & 34.52 & 32.26 & 63.80 & 63.35 & 61.21 & 57.31 & 75.32 & 74.11 & 70.85 & 63.42 & 63.86 & 63.44 & 61.38 & 57.10 \\
& Matryoshka-Adaptor-S
& 36.50 & 36.01 & 34.68 & 32.42 & 64.70 & 64.35 & 62.64 & 59.23 & 76.08 & 75.18 & 72.12 & 65.02 & 63.65 & 63.27 & 61.36 & 56.92 \\
\cdashline{2-18}
& PCA
& 36.02 & 35.49 & 34.14 & 31.29 & 64.63 & 63.84 & 62.05 & 57.20 & 74.90 & 73.40 & 68.75 & 59.62 & 63.77 & 63.68 & 62.10 & 59.34 \\
& PCA + RRC
& 36.09 & 35.49 & 34.20 & 31.37 & 64.74 & 63.98 & 62.19 & 57.26 & 74.84 & 73.36 & 68.62 & 59.48 & 63.78 & 63.39 & 61.80 & 58.88 \\
\cdashline{2-18}
& Whitening
& 34.94 & 35.19 & 34.12 & 32.33 & 61.87 & 62.33 & 61.79 & 58.30 & 73.70 & 73.62 & 70.17 & 61.29 & 58.37 & 59.77 & 60.53 & 58.99 \\
& Whitening + RRC
& 34.95 & 35.06 & 34.20 & 32.36 & 61.80 & 62.27 & 61.81 & 58.27 & 73.70 & 73.61 & 70.26 & 61.38 & 58.26 & 59.55 & 60.32 & 58.52 \\
\cdashline{2-18}
& SpecTemp
& 36.15 & 35.63 & 34.63 & 32.42 & 64.92\signs & 64.14 & 62.57 & 58.53 & 75.29 & 74.13 & 70.12 & 61.72 & 64.00 & 63.69 & 62.81 & 59.66 \\
& SpecTemp + RRC
& 36.18 & 35.67 & 34.65 & 32.46 & 64.84\signs & 64.12 & 62.67 & 58.67 & 75.40 & 74.25 & 70.38 & 61.84 & 63.80 & 63.65 & 62.61 & 59.55 \\
\midrule

\multirow{14}{*}{GTE-7B}
& Full Dimension
& \multicolumn{4}{c}{\cellcolor{gray!25}{39.14}} & \multicolumn{4}{c}{\cellcolor{gray!25}{66.93}} & \multicolumn{4}{c}{\cellcolor{gray!25}{73.06}} & \multicolumn{4}{c}{\cellcolor{gray!25}{62.01}} \\
& PCA ($k{=}D$)
& \multicolumn{4}{c}{\cellcolor{gray!25}39.05\rlap{\signs}} & \multicolumn{4}{c}{\cellcolor{gray!25}67.38\rlap{\sigup}} & \multicolumn{4}{c}{\cellcolor{gray!25}72.94} & \multicolumn{4}{c}{\cellcolor{gray!25}62.44\rlap{\signs}} \\
& PCA ($k{=}D$) + RRC
& \multicolumn{4}{c}{\cellcolor{gray!25}39.06\rlap{\signs}} & \multicolumn{4}{c}{\cellcolor{gray!25}67.29\rlap{\sigup}} & \multicolumn{4}{c}{\cellcolor{gray!25}72.94} & \multicolumn{4}{c}{\cellcolor{gray!25}63.56\rlap{\sigup}} \\

& Prefix Truncation
& 38.40 & 37.98 & 36.94 & 34.16 & 65.26 & 64.66 & 61.98 & 56.95 & 71.64 & 70.87 & 68.09 & 60.77 & 59.79 & 58.53 & 53.61 & 48.61 \\
& Random Truncation
& 38.36 & 37.88 & 36.50 & 34.08 & 65.30 & 64.37 & 61.38 & 56.90 & 71.66 & 70.62 & 67.70 & 61.17 & 60.34 & 59.38 & 56.65 & 50.28 \\
& Random Projection
& 38.31 & 37.84 & 36.69 & 34.38 & 65.32 & 64.40 & 62.49 & 57.33 & 71.48 & 70.58 & 68.06 & 62.05 & 60.57 & 59.70 & 56.72 & 52.09 \\
& Matryoshka-Adaptor-U
& 38.43 & 38.16 & 37.20 & 34.94 & 65.75 & 65.20 & 63.17 & 59.35 & 71.67 & 70.95 & 68.52 & 62.88 & 60.57 & 59.64 & 57.52 & 53.82 \\
& Matryoshka-Adaptor-S
& 38.38 & 38.07 & 37.01 & 34.88 & 66.50 & 66.15 & 64.60 & 61.21 & 73.29 & 72.76 & 70.76 & 65.74 & 59.43 & 58.55 & 56.34 & 52.60 \\
\cdashline{2-18}
& PCA
& 38.80 & 38.32 & 36.85 & 34.73 & 67.10\signs & 66.38 & 64.26 & 59.65 & 71.65 & 70.60 & 67.49 & 61.19 & 62.32\signs & 61.57\signs & 58.77 & 55.86 \\
& PCA + RRC
& 38.80 & 38.33 & 36.79 & 34.64 & 67.08\signs & 66.33 & 64.37 & 59.72 & 71.64 & 70.59 & 67.49 & 61.12 & 62.75 & 61.80\signs & 59.16 & 55.57 \\
\cdashline{2-18}
& Whitening
& 37.16 & 37.43 & 36.64 & 34.99 & 64.43 & 65.24 & 64.51 & 61.10 & 70.83 & 70.55 & 68.36 & 62.21 & 58.45 & 59.59 & 59.30 & 57.00 \\
& Whitening + RRC
& 37.17 & 37.41 & 36.65 & 34.96 & 64.33 & 65.21 & 64.48 & 61.12 & 70.86 & 70.54 & 68.46 & 62.25 & 58.28 & 59.33 & 59.16 & 56.99 \\
\cdashline{2-18}
& SpecTemp
& 38.84 & 38.44 & 37.02 & 35.11 & 67.25 & 66.81\signs & 65.11 & 61.27 & 71.98 & 71.06 & 68.38 & 62.30 & 62.54\signs & 62.31\signs & 60.41 & 57.41 \\
& SpecTemp + RRC
& 38.81 & 38.44 & 37.00 & 35.07 & 67.11\signs & 66.68\signs & 64.99 & 61.27 & 72.03 & 71.16 & 68.53 & 62.46 & 62.80 & 62.42\signs & 60.45 & 57.28 \\
\midrule

\multirow{14}{*}{DIVER-4B}
& Full Dimension
& \multicolumn{4}{c}{\cellcolor{gray!25}{25.75}} & \multicolumn{4}{c}{\cellcolor{gray!25}{54.33}} & \multicolumn{4}{c}{\cellcolor{gray!25}{65.87}} & \multicolumn{4}{c}{\cellcolor{gray!25}{47.78}} \\
& PCA ($k{=}D$)
& \multicolumn{4}{c}{\cellcolor{gray!25}25.94\rlap{\signs}} & \multicolumn{4}{c}{\cellcolor{gray!25}50.80} & \multicolumn{4}{c}{\cellcolor{gray!25}64.38} & \multicolumn{4}{c}{\cellcolor{gray!25}48.15\rlap{\signs}} \\
& PCA ($k{=}D$) + RRC
& \multicolumn{4}{c}{\cellcolor{gray!25}26.12} & \multicolumn{4}{c}{\cellcolor{gray!25}51.98} & \multicolumn{4}{c}{\cellcolor{gray!25}65.25} & \multicolumn{4}{c}{\cellcolor{gray!25}48.34\rlap{\signs}} \\
& Prefix Truncation
& 22.57 & 19.96 & 13.20 & 7.93 & 50.28 & 45.22 & 30.62 & 17.88 & 57.89 & 51.76 & 35.28 & 19.10 & 46.49 & 43.43 & 33.74 & 25.30 \\
& Random Truncation
& 24.46 & 23.53 & 20.66 & 15.84 & 50.95 & 48.99 & 43.15 & 32.08 & 61.33 & 58.42 & 49.62 & 33.34 & 45.50 & 44.00 & 39.36 & 31.48 \\
& Random Projection
& 23.98 & 23.32 & 20.80 & 16.09 & 50.38 & 48.69 & 43.17 & 32.82 & 60.65 & 58.08 & 49.51 & 33.66 & 45.06 & 43.76 & 40.36 & 32.87 \\
& Matryoshka-Adaptor-U
& 23.49 & 21.92 & 19.01 & 15.84 & 50.52 & 47.34 & 41.02 & 33.13 & 60.67 & 56.82 & 47.28 & 32.60 & 46.60 & 45.14 & 41.77 & 37.66 \\
& Matryoshka-Adaptor-S
& 25.56\signs & 24.79 & 23.08 & 20.30 & 54.69 & 52.55 & 48.56 & 42.49 & 65.98\signs & 63.43 & 57.34 & 46.56 & 41.56 & 38.60 & 34.34 & 31.21 \\
\cdashline{2-18}
& PCA
& 26.16\sigup & 25.50\signs & 23.88 & 20.68 & 49.79 & 48.51 & 44.45 & 36.70 & 61.42 & 58.58 & 51.08 & 39.65 & 47.64\signs & 47.21\signs & 44.68 & 41.29 \\
& PCA + RRC
& 26.40\sigup & 25.80\signs & 24.02 & 21.03 & 51.66 & 50.42 & 46.76 & 39.57 & 61.69 & 58.98 & 51.49 & 40.16 & 48.07\signs & 47.22\signs & 44.68 & 41.12 \\
\cdashline{2-18}
& Whitening
& 25.61\signs & 25.51\signs & 23.75 & 21.37 & 48.65 & 48.68 & 46.84 & 39.65 & 61.78 & 59.83 & 53.32 & 41.51 & 44.34 & 45.16 & 43.83 & 40.95 \\
& Whitening + RRC
& 25.79\signs & 25.72\signs & 24.03 & 21.58 & 49.19 & 49.33 & 47.79 & 41.33 & 62.37 & 60.61 & 54.33 & 42.52 & 44.04 & 45.07 & 43.63 & 40.51 \\
\cdashline{2-18}
& SpecTemp
& 26.28\sigup & 25.83\signs & 24.15 & 21.46 & 50.17 & 49.50 & 46.29 & 39.26 & 62.26 & 59.97 & 53.14 & 41.61 & 48.24\signs & 47.88\signs & 45.15 & 41.71 \\
& SpecTemp + RRC
& 26.54\sigup & 26.07\signs & 24.34 & 21.70 & 51.87 & 51.02 & 48.25 & 41.33 & 62.85 & 60.69 & 54.20 & 42.83 & 48.15\signs & 47.49\signs & 45.07 & 41.48 \\
\bottomrule
\end{tabular}
\endgroup
\end{table*}

\begin{table*}[t]
\centering
\caption{Retrieval performance (nDCG@10) on the BRIGHT datasets Biology, Earth Science, LeetCode, Economics, and TheoremQA-T. Scores are multiplied by 100 and averaged over seeds 1999, 5, and 2026 at target dimensions $k\in\{768,512,256,128\}$. Gray cells show full-dimensional references. \sigup{} marks a score significantly better than full dimension and \signs{} one not significantly different from it (paired t-test, $p<0.05$), in both cases consistently for all three seeds. Unmarked scores are significantly worse or inconsistent across seeds.}
\label{tab:main_results_bright}
\begingroup
\footnotesize
\setlength{\tabcolsep}{0.6pt}
\renewcommand{\arraystretch}{0.94}
\begin{tabular}{@{}lllllllllllllllllllllllllllcccccccccccccccccccc@{}}
\toprule
\multirow{2}{*}{{\textbf{Model}}} & \multirow{2}{*}{{\textbf{Method}}}
& \multicolumn{4}{c}{\textbf{Biology}}
& \multicolumn{4}{c}{\textbf{Earth Science}}
& \multicolumn{4}{c}{\textbf{LeetCode}}
& \multicolumn{4}{c}{\textbf{Economics}}
& \multicolumn{4}{c}{\textbf{TheoremQA-T}}
\\
\cmidrule(lr){3-6}
\cmidrule(lr){7-10}
\cmidrule(lr){11-14}
\cmidrule(lr){15-18}
\cmidrule(lr){19-22}
& & \textbf{768} & \textbf{512} & \textbf{256} & \textbf{128} & \textbf{768} & \textbf{512} & \textbf{256} & \textbf{128} & \textbf{768} & \textbf{512} & \textbf{256} & \textbf{128} & \textbf{768} & \textbf{512} & \textbf{256} & \textbf{128} & \textbf{768} & \textbf{512} & \textbf{256} & \textbf{128} \\
\midrule

\multirow{14}{*}{\rotatebox[origin=c]{90}{Qwen3-8B}}
& Full Dimension
& \multicolumn{4}{c}{\cellcolor{gray!25}{19.60}} & \multicolumn{4}{c}{\cellcolor{gray!25}{32.80}} & \multicolumn{4}{c}{\cellcolor{gray!25}{34.03}} & \multicolumn{4}{c}{\cellcolor{gray!25}{18.52}} & \multicolumn{4}{c}{\cellcolor{gray!25}{39.35}} \\
& PCA ($k{=}D$)
& \multicolumn{4}{c}{\cellcolor{gray!25}20.08\rlap{\signs}} & \multicolumn{4}{c}{\cellcolor{gray!25}31.18\rlap{\signs}} & \multicolumn{4}{c}{\cellcolor{gray!25}35.94\rlap{\signs}} & \multicolumn{4}{c}{\cellcolor{gray!25}21.93\rlap{\sigup}} & \multicolumn{4}{c}{\cellcolor{gray!25}40.60\rlap{\signs}} \\
& PCA ($k{=}D$) + RRC
& \multicolumn{4}{c}{\cellcolor{gray!25}25.17\rlap{\sigup}} & \multicolumn{4}{c}{\cellcolor{gray!25}28.57} & \multicolumn{4}{c}{\cellcolor{gray!25}34.42\rlap{\signs}} & \multicolumn{4}{c}{\cellcolor{gray!25}22.75\rlap{\sigup}} & \multicolumn{4}{c}{\cellcolor{gray!25}41.96} \\

& Prefix Truncation
& 18.20\signs & 17.53\signs & 16.98 & 14.96 & 31.95\signs & 30.19 & 28.58 & 28.58 & 32.95\signs & 34.07\signs & 32.58\signs & 27.32 & 19.96\signs & 19.55\signs & 19.32\signs & 15.36 & 35.72 & 34.87 & 31.90 & 27.32 \\
& Random Truncation
& 18.64 & 17.50 & 15.73 & 13.88 & 30.50 & 29.83 & 28.26 & 23.39 & 32.28 & 32.24 & 30.18 & 26.34 & 17.33\signs & 16.83\signs & 16.10 & 15.89 & 37.06 & 36.15 & 34.44 & 28.54 \\
& Random Projection
& 19.25\signs & 18.26\signs & 17.13 & 15.24 & 31.43 & 30.90 & 30.02 & 24.70 & 32.35 & 31.74 & 29.85 & 25.76 & 18.04\signs & 17.19\signs & 18.29\signs & 15.33 & 36.61 & 33.53 & 33.26 & 30.40 \\
& Matryoshka-Adaptor-U
& 18.51\signs & 17.74\signs & 17.17 & 15.29 & 31.06 & 30.45 & 29.38 & 27.35 & 34.71\signs & 35.18\signs & 33.98\signs & 28.15 & 20.38 & 20.68 & 19.55\signs & 17.11\signs & 35.95 & 34.83 & 31.94 & 27.82 \\
& Matryoshka-Adaptor-S
& 31.15\sigup & 31.78\sigup & 30.71\sigup & 29.59\sigup & 38.90 & 36.88 & 34.03\signs & 31.36\signs & 34.33\signs & 33.62\signs & 33.79\signs & 31.47\signs & 23.05\sigup & 22.18 & 20.42\signs & 16.51\signs & 42.39 & 41.22 & 37.63 & 32.72 \\
\cdashline{2-22}
& PCA
& 19.98\signs & 20.17\signs & 19.22\signs & 19.78\signs & 30.75 & 31.09\signs & 29.26 & 25.44 & 35.46\signs & 34.85\signs & 32.42\signs & 30.86\signs & 21.29\sigup & 21.05\sigup & 21.82\sigup & 20.98\signs & 40.82\sigup & 39.28\signs & 38.36\signs & 34.87 \\
& PCA + RRC
& 25.04\sigup & 25.56\sigup & 25.10\sigup & 24.01 & 29.04 & 28.32 & 25.54 & 22.38 & 34.75\signs & 33.97\signs & 31.69\signs & 29.04 & 22.73\sigup & 22.82\sigup & 21.64 & 19.17\signs & 41.67\signs & 40.85\signs & 37.92\signs & 33.14 \\
\cdashline{2-22}
& Whitening
& 16.91 & 18.69\signs & 17.37\signs & 16.73 & 27.78 & 29.52 & 29.82 & 25.78 & 33.72\signs & 33.25\signs & 31.14 & 29.47 & 18.02\signs & 19.37\signs & 21.28\signs & 17.22\signs & 41.73\signs & 40.60\signs & 39.20\signs & 35.34 \\
& Whitening + RRC
& 19.00\signs & 20.73\signs & 19.67\signs & 18.67\signs & 28.51 & 29.88 & 29.04 & 22.82 & 35.44\signs & 33.50\signs & 31.17 & 28.61 & 19.16\signs & 19.61\signs & 21.65 & 18.03\signs & 43.24 & 41.82\signs & 38.42\signs & 34.66 \\
\cdashline{2-22}
& SpecTemp
& 20.26\signs & 19.85\signs & 18.70\signs & 17.28 & 30.71 & 31.43\signs & 29.90 & 26.31 & 35.57\signs & 34.67\signs & 32.81\signs & 30.72\signs & 20.92\sigup & 20.63\signs & 21.01\signs & 18.99\signs & 41.54\sigup & 40.50\signs & 39.11\signs & 35.48 \\
& SpecTemp + RRC
& 24.62\sigup & 25.27\sigup & 24.51\sigup & 22.78 & 29.73 & 29.51 & 28.11 & 23.96 & 34.80\signs & 34.20\signs & 31.80\signs & 30.00 & 23.16\sigup & 23.41\sigup & 23.16\sigup & 20.29\signs & 41.98 & 41.10\signs & 39.04\signs & 35.51 \\
\midrule

\multirow{14}{*}{\rotatebox[origin=c]{90}{GTE-7B}}
& Full Dimension
& \multicolumn{4}{c}{\cellcolor{gray!25}{30.65}} & \multicolumn{4}{c}{\cellcolor{gray!25}{40.18}} & \multicolumn{4}{c}{\cellcolor{gray!25}{30.77}} & \multicolumn{4}{c}{\cellcolor{gray!25}{15.81}} & \multicolumn{4}{c}{\cellcolor{gray!25}{35.75}} \\
& PCA ($k{=}D$)
& \multicolumn{4}{c}{\cellcolor{gray!25}25.51} & \multicolumn{4}{c}{\cellcolor{gray!25}36.40} & \multicolumn{4}{c}{\cellcolor{gray!25}31.66\rlap{\signs}} & \multicolumn{4}{c}{\cellcolor{gray!25}13.54} & \multicolumn{4}{c}{\cellcolor{gray!25}37.03\rlap{\signs}} \\
& PCA ($k{=}D$) + RRC
& \multicolumn{4}{c}{\cellcolor{gray!25}32.52\rlap{\signs}} & \multicolumn{4}{c}{\cellcolor{gray!25}40.48\rlap{\signs}} & \multicolumn{4}{c}{\cellcolor{gray!25}32.49} & \multicolumn{4}{c}{\cellcolor{gray!25}20.67} & \multicolumn{4}{c}{\cellcolor{gray!25}47.12\rlap{\sigup}} \\
& Prefix Truncation
& 26.29 & 24.60 & 19.50 & 13.48 & 39.48\signs & 38.59\signs & 33.77 & 26.02 & 28.80\signs & 26.86 & 23.50 & 17.75 & 16.63\signs & 17.52\signs & 13.46\signs & 8.42 & 34.96\signs & 30.59 & 22.79 & 16.66 \\
& Random Truncation
& 28.63\signs & 26.79 & 22.82 & 16.13 & 37.12 & 35.72 & 31.64 & 26.68 & 29.27 & 28.04 & 24.77 & 19.70 & 13.98 & 13.89 & 11.60 & 8.54 & 29.91 & 29.22 & 22.29 & 17.16 \\
& Random Projection
& 26.99 & 25.94 & 23.73 & 16.81 & 38.34\signs & 36.24 & 34.44 & 28.39 & 28.90 & 27.11 & 23.97 & 20.56 & 15.68\signs & 14.84\signs & 11.60 & 9.08 & 33.60 & 31.86 & 24.04 & 17.87 \\
& Matryoshka-Adaptor-U
& 26.95 & 26.09 & 22.44 & 18.00 & 39.15 & 38.45\signs & 34.35 & 28.01 & 27.66 & 26.41 & 23.23 & 18.36 & 14.74\signs & 15.59\signs & 11.59 & 9.01 & 33.19 & 27.84 & 21.37 & 17.15 \\
& Matryoshka-Adaptor-S
& 30.76\signs & 30.09\signs & 27.51\signs & 26.05 & 29.48 & 28.45 & 26.18 & 22.89 & 17.37 & 17.18 & 16.23 & 14.94 & 4.03 & 4.12 & 2.86 & 2.47 & 44.80 & 45.14 & 43.73 & 41.52\signs \\
\cdashline{2-22}
& PCA
& 22.88 & 20.78 & 18.21 & 15.70 & 34.15 & 33.12 & 28.17 & 22.12 & 31.69\signs & 31.19\signs & 29.11\signs & 25.54 & 11.85 & 11.47 & 10.58 & 7.83 & 36.83\signs & 36.26\signs & 32.46\signs & 25.03 \\
& PCA + RRC
& 28.80\signs & 27.38 & 22.61 & 19.27 & 36.84 & 34.52 & 30.37 & 25.96 & 31.89\signs & 31.61\signs & 27.87 & 25.42 & 19.22 & 17.86 & 16.64\signs & 12.55 & 46.41\sigup & 45.34\sigup & 40.47 & 31.46\signs \\
\cdashline{2-22}
& Whitening
& 26.90\signs & 27.01 & 24.36 & 23.95 & 38.67\signs & 38.81\signs & 32.39 & 23.93 & 27.61 & 28.68\signs & 27.57 & 25.01 & 18.13\signs & 18.70\signs & 17.64\signs & 13.54\signs & 33.11\signs & 31.88\signs & 33.18\signs & 27.85 \\
& Whitening + RRC
& 28.50\signs & 28.03\signs & 24.98 & 24.42 & 40.88\signs & 39.14\signs & 33.77 & 27.99 & 27.74 & 29.02\signs & 26.73 & 24.71 & 19.56\signs & 19.92\signs & 18.54\signs & 15.00\signs & 39.52\signs & 39.79\signs & 41.08 & 36.25\signs \\
\cdashline{2-22}
& SpecTemp
& 25.20 & 23.75 & 22.50 & 22.15 & 34.78 & 33.79 & 30.06 & 23.54 & 31.08\signs & 30.96\signs & 28.68\signs & 25.69 & 12.29 & 12.47 & 13.52\signs & 13.29\signs & 37.34\signs & 37.67\signs & 34.94\signs & 28.76 \\
& SpecTemp + RRC
& 30.62\signs & 28.58\signs & 24.59 & 23.12 & 37.49 & 35.48 & 32.80 & 28.19 & 31.71\signs & 31.47\signs & 28.35\signs & 25.94 & 20.03 & 19.38 & 18.27\signs & 15.08\signs & 47.01\sigup & 46.17\sigup & 43.95\sigup & 38.09\signs \\
\midrule

\multirow{14}{*}{\rotatebox[origin=c]{90}{DIVER-4B}}
& Full Dimension
& \multicolumn{4}{c}{\cellcolor{gray!25}{42.49}} & \multicolumn{4}{c}{\cellcolor{gray!25}{44.63}} & \multicolumn{4}{c}{\cellcolor{gray!25}{37.40}} & \multicolumn{4}{c}{\cellcolor{gray!25}{21.78}} & \multicolumn{4}{c}{\cellcolor{gray!25}{36.09}} \\
& PCA ($k{=}D$)
& \multicolumn{4}{c}{\cellcolor{gray!25}34.04} & \multicolumn{4}{c}{\cellcolor{gray!25}39.67} & \multicolumn{4}{c}{\cellcolor{gray!25}39.04\rlap{\signs}} & \multicolumn{4}{c}{\cellcolor{gray!25}19.35} & \multicolumn{4}{c}{\cellcolor{gray!25}37.32\rlap{\signs}} \\
& PCA ($k{=}D$) + RRC
& \multicolumn{4}{c}{\cellcolor{gray!25}33.81} & \multicolumn{4}{c}{\cellcolor{gray!25}43.03\rlap{\signs}} & \multicolumn{4}{c}{\cellcolor{gray!25}37.91\rlap{\signs}} & \multicolumn{4}{c}{\cellcolor{gray!25}18.23} & \multicolumn{4}{c}{\cellcolor{gray!25}37.85\rlap{\signs}} \\

& Prefix Truncation
& 43.54\signs & 41.74\signs & 34.26 & 25.66 & 42.54\signs & 41.17 & 34.01 & 21.73 & 36.38\signs & 33.84 & 28.06 & 20.35 & 21.29\signs & 18.80 & 15.00 & 11.29 & 35.36\signs & 34.42\signs & 29.28 & 22.23 \\
& Random Truncation
& 38.69 & 37.22 & 33.19 & 23.84 & 42.80 & 41.17 & 37.01 & 28.81 & 37.02\signs & 36.29 & 32.88 & 28.20 & 21.36\signs & 20.46\signs & 18.72 & 13.83 & 35.46\signs & 34.40\signs & 31.72 & 24.37 \\
& Random Projection
& 40.95 & 36.82 & 31.77 & 27.26 & 42.58 & 42.48 & 38.55 & 31.58 & 36.12 & 34.69 & 32.61 & 27.45 & 21.65\signs & 20.69\signs & 17.71 & 16.16 & 34.90\signs & 33.70 & 30.26 & 23.26 \\
& Matryoshka-Adaptor-U
& 40.69\signs & 38.52 & 35.27 & 30.98 & 42.82 & 42.51 & 38.96 & 35.61 & 36.74\signs & 35.23 & 33.56 & 28.88 & 21.23\signs & 20.37 & 18.86 & 15.18 & 36.47\signs & 36.28\signs & 32.33\signs & 28.06 \\
& Matryoshka-Adaptor-S
& 21.18 & 19.99 & 18.90 & 17.56 & 27.49 & 26.00 & 23.63 & 21.12 & 35.85\signs & 35.47\signs & 34.51\signs & 33.10\signs & 12.76 & 12.19 & 10.72 & 9.30 & 45.31 & 46.79 & 44.44 & 43.77 \\
\cdashline{2-22}
& PCA
& 32.18 & 31.78 & 30.10 & 27.45 & 36.82 & 34.72 & 33.16 & 29.21 & 38.17\signs & 37.49\signs & 35.77\signs & 32.64 & 18.80 & 19.02 & 17.24 & 15.33 & 36.91\signs & 35.52\signs & 34.76\signs & 32.39\signs \\
& PCA + RRC
& 33.06 & 32.53 & 31.87 & 28.63 & 40.09 & 39.00 & 36.41 & 32.27 & 37.41\signs & 36.94\signs & 33.46 & 27.94 & 18.06 & 18.08 & 16.62 & 14.64 & 37.97\signs & 37.57\signs & 36.38\signs & 32.30\signs \\
\cdashline{2-22}
& Whitening
& 30.12 & 29.73 & 29.00 & 27.22 & 37.17 & 35.81 & 32.71 & 30.17 & 37.25\signs & 37.77\signs & 34.76 & 32.66 & 18.34 & 19.49\signs & 16.50 & 17.09 & 34.63\signs & 36.50\signs & 37.23\signs & 34.01\signs \\
& Whitening + RRC
& 29.31 & 28.41 & 29.62 & 28.38 & 36.05 & 34.99 & 31.97 & 29.26 & 37.51\signs & 38.72\signs & 33.94 & 28.27 & 18.78 & 17.88 & 15.15 & 14.05 & 33.39\signs & 35.10\signs & 34.41\signs & 32.46\signs \\
\cdashline{2-22}
& SpecTemp
& 32.18 & 31.56 & 30.79 & 27.58 & 36.79 & 35.42 & 33.62 & 30.33 & 38.15\signs & 37.40\signs & 35.02 & 32.97 & 19.16 & 19.23 & 17.86 & 16.46 & 36.81\signs & 36.95\signs & 36.63\signs & 34.67\signs \\
& SpecTemp + RRC
& 33.09 & 32.52 & 32.01 & 29.06 & 39.80 & 39.14 & 35.51 & 30.75 & 37.81\signs & 37.33\signs & 34.59 & 29.13 & 18.01 & 18.16 & 16.50 & 14.92 & 38.11\signs & 37.65\signs & 36.73\signs & 33.10\signs \\
\bottomrule
\end{tabular}
\endgroup
\end{table*}

\section{Methodological Details and Proofs}
\label{app:methodology}

We provide the exponent-selection rule used by SpecTemp and the proofs of the two propositions in the methodology.

\subsection{SpecTemp Exponent Selection}
\label{app:spectemp}

SpecTemp~\cite{LiEK26_SpecTemp} selects the power-law exponent in Eq.~\eqref{eq:priors} from the corpus spectrum and the target dimension. Let $\mathcal{T}$ index the smallest decile of the full spectrum. SpecTemp estimates a noise floor and defines a spectral signal-to-noise ratio as
\begin{equation}
    \sigma^2_{\mathrm{noise}}
    =\frac{1}{|\mathcal{T}|}\sum_{j\in\mathcal{T}}\lambda_j,
    \quad
    \operatorname{SNR}(i)
    =\left(
        \frac{\lambda_i}{\sigma^2_{\mathrm{noise}}}-1
      \right)_{+}.
    \label{eq:noise-floor}
\end{equation}
It then sets
\begin{equation}
    \gamma_{\mathrm{ST}}(k)
    =\min\left\{
        1,
        \frac{\operatorname{SNR}(k)}
             {\operatorname{SNR}(k_{\mathrm{knee}})}
      \right\},
    \label{eq:gamma-st}
\end{equation}
where $k_{\mathrm{knee}}$ is the knee of the SNR curve detected by Kneedle~\cite{SatopaaAIR11_Kneedle}. Since the eigenvalues are ordered in descending order, this rule applies full whitening up to the knee and progressively weaker inverse-variance scaling at larger target dimensions.

\subsection{Proof of Proposition~\ref{prop:margin-optimum}}
\label{app:margin-optimum-proof}

\begin{proof}
For any nonnegative, nonzero weight vector $\mathbf{w}$,
replacing $m_i$ by $(m_i)_{+}$ cannot decrease the numerator
of $A(\mathbf{w})$. Therefore,
\begin{equation}
    A(\mathbf{w})
    \leq
    \frac{\sum_{i=1}^{k}w_i(m_i)_{+}}
         {\sqrt{\sum_{i=1}^{k}w_i^2v_i}}
    \leq
    \sqrt{\sum_{i=1}^{k}\frac{(m_i)_{+}^{2}}{v_i}}.
    \label{eq:margin-bound}
\end{equation}
The second inequality follows from the Cauchy--Schwarz
inequality applied to the vectors with entries
$\sqrt{v_i}\,w_i$ and $(m_i)_{+}/\sqrt{v_i}$.

Both vectors are nonzero under the assumptions of the
proposition. Equality in the second inequality holds
if and only if they are positive multiples of each other,
giving
\[
    w_i=c\,\frac{(m_i)_{+}}{v_i},
    \qquad c>0.
\]
These weights assign zero to every direction with
$m_i\leq0$, so they also attain equality in the first
inequality. They are therefore precisely the maximisers,
and the maximum is
$\sqrt{\sum_{i=1}^{k}(m_i)_{+}^{2}/v_i}$.
\end{proof}

\subsection{Proof of Proposition~\ref{prop:residual-invariance}}
\label{app:residual-invariance-proof}

\begin{proof}
Let tildes denote quantities computed from the rescaled coordinates
$\widetilde{c}_i(t)=\lambda_i^\alpha c_i(t)$ of both queries and documents.
Then $\widetilde{z}_{qj,i}=\lambda_i^{2\alpha}z_{qj,i}$ for every training pair, so
\[
    \widetilde{\widehat{m}}_i=\lambda_i^{2\alpha}\widehat{m}_i,
    \qquad
    \widetilde{\widehat{v}}_i=\lambda_i^{4\alpha}\widehat{v}_i,
    \qquad
    \widetilde{\widehat{e}}_i=\lambda_i^{-2\alpha}\widehat{e}_i .
\]
Since $\lambda_i>0$, the positive support $\mathcal{I}_{+}$ is unchanged.
On $\mathcal{I}_{+}$, Eq.~\eqref{eq:residual} gives
\[
    \log\widetilde{\widehat{e}}_i
    =a-(\gamma_{\mathrm e}+2\alpha)\log\lambda_i+u_i .
\]
By Eq.~\eqref{eq:residual-orthogonality}, the residual vector $\mathbf u=(u_i)_{i\in\mathcal I_+}$ is orthogonal to both regressors in Eq.~\eqref{eq:powerlaw-fit}, the constant and $\log\lambda_i$.
Thus, the least-squares fit to the rescaled log weights retains intercept $a$ and shifts the exponent to $\gamma_{\mathrm e}+2\alpha$, leaving the residual $u_i$ unchanged.
\end{proof}

\section{Implementation Details}
\label{app:implementation-details}

\paragraph{Embedding generation and spectral estimation.}
We generate and cache query and document embeddings using the checkpoints listed in Section~\ref{sec:models}, keeping all encoder parameters fixed. The same cached embeddings are reused across compression methods.
We $\ell_2$-normalise the original embeddings before fitting the compression transformations. For each dataset and seed, we estimate the corpus mean and centred covariance from up to one million documents sampled uniformly without replacement, using the full corpus when it is smaller. The resulting eigendecomposition provides the principal directions and eigenvalues used by the spectral methods. These corpus-derived quantities are estimated without relevance labels and shared across query folds. Each RRC variant uses the same spectral basis as its corresponding uncalibrated compressor. After applying a compression transformation, we $\ell_2$-normalise the output vectors for cosine retrieval.

\paragraph{Residual Calibration}

For each training query, we pair every judged positive with one sampled negative. For each query, we cache the top-200 candidates retrieved using the
full-dimensional embeddings; these label-free retrieval pools are shared across folds. On BRIGHT, excluded documents are removed before truncating the pools to 200 and are also excluded from random-negative sampling and evaluation rankings. 
A fixed majority of negatives are drawn uniformly from the query's pool and the rest from the corpus, excluding judged positives from both sources.
We fall back to corpus sampling when no eligible pool document remains. Triples are fixed within each training fold and sampled independently across folds.
We compute margin contributions from centred, unweighted spectral projections, before output normalisation. We average margins and their squares within each query and then across queries. For each direction, the variance estimate is the averaged squared margin minus the square of the averaged margin. We obtain the margin-based weights, fit their power-law trend on the positive support, and add the residual to the prior's log weights as described in Section~\ref{sec:residual-calibration}. The resulting transformation is fixed before evaluation and applied identically to the document corpus and held-out queries.\looseness=-1

\paragraph{Matryoshka-Adaptor}

We reimplement Matryoshka-Adaptor~\cite{yoon-etal-2024-matryoshka} based on the paper, as no official implementation is publicly available.
We use a residual MLP $D\!\rightarrow\!1024\!\rightarrow\!D$ with ReLU and a skip connection, initialising the output layer to zero to start from the identity map.
The same adaptor transforms queries and documents, followed by prefix truncation and $\ell_2$ normalisation for cosine retrieval. Both variants jointly optimise $k\in\{128,256,512,768\}$, averaging similarity-based losses across prefix lengths and assigning unit coefficients to all loss terms.

The unsupervised variant combines pairwise-similarity preservation between distinct batch documents, neighbourhood-similarity preservation, and full-dimensional reconstruction using absolute deviations from the original embeddings.
Each document's top-10 neighbours are retrieved from a fixed random sample of at most 50,000 corpus documents. Training uses 5,000 steps at the reported learning rate of $10^{-3}$.

The supervised variant starts from the trained unsupervised adaptor and is fine-tuned on the same training folds as RRC; held-out queries enter none of its losses.
It applies the unsupervised objectives to corpus documents and training queries, using each query's top-10 full-dimensional retrieval results as its neighbours.
At each prefix length, it adds a pairwise logistic ranking loss weighted by the positive document's relevance grade.

Training triples are resampled at each step by drawing a training query, one of its judged positives, and a negative.
Negative pools are formed by removing judged positives and BRIGHT exclusions from the top-200 full-dimensional retrieval results, without replenishment.
The mixture of pool and corpus negatives, exclusions for random negatives, and fallback to corpus sampling match RRC.

Both stages use Adam with a batch size of 128. In preliminary held-out evaluations on BRIGHT, supervised fine-tuning with the originally reported learning rate of $10^{-3}$ underperformed the unsupervised adaptor.
Based on these results, we reduced the supervised learning rate to $10^{-5}$ for all configurations, with no further hyperparameter search.
We train for 10,000 steps, compared with the original maximum of 5,000. The original paper reports early stopping; because each BRIGHT training fold provides only 60--114 queries, we reserve no development set and evaluate the final checkpoint. RRC uses the same training folds and requires neither a learning rate nor a stopping criterion.

\paragraph{Computing environment.}
Spectral estimation and residual calibration are implemented in NumPy. We use Faiss for exhaustive retrieval and for the quantization experiments described in Section~\ref{sec:exp-quantization}.
Experiments are conducted on a cluster equipped with four NVIDIA H100 GPUs.

\end{document}